%
\documentclass[10pt,twocolumn,oneside]{IEEEtran}
\usepackage{,amssymb,paralist,ifthen}
\usepackage{amsmath,amsfonts,amssymb,bm}
\usepackage[tight,footnotesize]{subfigure}
\usepackage{graphicx,stfloats}
\usepackage{algorithm,algpseudocode}
\usepackage{cite,url}
\usepackage{color}
\usepackage{theorem} 

\newtheorem{lemma}{Lemma}

\newtheorem{Theorem}{Theorem}
\newtheorem{Def}{Definition}

\hyphenation{distrib-u-ted}

\begin{document}

\title{Distributed Power Allocation for \\Coordinated
Multipoint Transmissions in \\Distributed Antenna Systems}

\author{Xiujun~Zhang,~\IEEEmembership{Member,~IEEE,}
        Yin~Sun,~\IEEEmembership{Member,~IEEE,}~Xiang~Chen,~\IEEEmembership{Member,~IEEE,}
        Shidong~Zhou,~\IEEEmembership{Member,~IEEE,}
        Jing~Wang,~\IEEEmembership{Member,~IEEE,
        and Ness B. Shroff ,~\IEEEmembership{Fellow,~IEEE}
\thanks{Manuscript received June 14, 2012; revised October 16, 2012 and February 3, 2013; accepted February 17, 2013. The associate editor coordinating the review of this paper and approving it for publication was Shahrokh Valaee.}
\thanks{This work is partially supported by National Basic Research Program of China (2012CB316002), China's S\&T Major Project (2011ZX03003-001-01), National Natural Science Foundation of China (61132002), Huawei Corp., International Science and Technology Cooperation Program (2010DFB10410), Tsinghua Research Funding (2010THZ02-3), the Army Research Office MURI Program W911NF-08-1-0238, and NSF Grants CNS-1012700 and CNS-1065136.}
\thanks{$^\dag$Xiujun Zhang, Xiang Chen~(the corresponding author), Shidong Zhou and Jing Wang are with the Department of Electronic Engineering, Research Institute of Information Technology, Tsinghua National Laboratory for Information Science and Technology (TNList), Tsinghua University, Beijing, China, 100084, e-mail:~\{zhangxiujun,chenxiang,zhousd,wangj\}@tsinghua.edu.cn. Dr. Xiang Chen is also with Aerospace Center, Tsinghua University.}
\thanks{$^\ddag$Yin Sun is with the Department of Electrical and Computer Engineering, the Ohio State University, 2015 Neil Ave., Columbus, OH 43210, USA, e-mail: sunyin02@gmail.com.}
\thanks{$^\star$Ness B. Shroff is with the Department of Electrical and Computer Engineering and the Department of Computer Science and Engineering, the Ohio State University, 2015 Neil Ave., Columbus, OH 43210, USA, e-mail: shroff@ece.osu.edu.}}
\thanks{Digital Object Identifier 10.1109/TWC.2013.13.120863}}

\markboth{ACCEPTED BY IEEE TRANSACTIONS ON WIRELESS COMMUNICATIONS, FEB. 2013}%
{ACCEPTED BY IEEE TRANSACTIONS ON WIRELESS COMMUNICATIONS, FEB. 2013}

\maketitle

\begin{abstract}

This paper investigates the distributed power allocation problem for
coordinated multipoint (CoMP) transmissions in distributed antenna
systems (DAS). Traditional duality-based optimization techniques cannot be directly applied to this problem, because the non-strict concavity of the CoMP transmission's achievable rate with respect to the transmission power induces that the local power allocation subproblems have non-unique optimum solutions. 
We propose a distributed power allocation algorithm to resolve this
non-strict concavity difficulty. This algorithm only requires local
information exchange among neighboring base stations serving the
same user, and is thus flexible with respect to network size and
topology. The step-size parameters of this algorithm are determined
by only local user access relationship (i.e., the number of users
served by each antenna), but do not rely on channel coefficients.
Therefore, the convergence speed of this algorithm is quite robust
to channel fading. We rigorously prove that this algorithm converges
to an optimum solution of the power allocation problem. Simulation
results are presented to demonstrate the effectiveness of the
proposed power allocation algorithm.
\end{abstract}

\begin{IEEEkeywords}
Coordinated multipoint transmission, distributed power allocation,
distributed antenna system.
\end{IEEEkeywords}

\section{Introduction}\label{sec:intro}
The explosive growth of mobile access services has led to a huge demand for enhanced throughput and extended coverage in the next generation wireless networks. In recent years, distributed antenna system (DAS) has emerged as a promising network architecture to achieve these goals \cite{Li2011,Choi2007,Park2009}. In this architecture, each base station is equipped with some remote antennas which are distributed in the entire cell area, as shown in Fig.~\ref{fig1}.
These distributed antennas are connected to the base station via wired backhaul network. By this, nearby distributed antennas are able to coordinate with each other and provide enhanced service experience to the mobile users. This technique is called the
coordination multipoint (CoMP) transmission in the literature \cite{Park2009,Sawahashi2010}.

One of the key techniques to realize high throughput in wireless networks is power allocation. Traditionally, power allocation of wireless networks is handled by centralized algorithms, e.g., \cite{Venturino2009,Papandreou2008,Gesbert2010,Luo2011}. These algorithms
request {multi-hop} signaling mechanisms to gather the channel state information (CSI) of all the wireless links at a central processing
unit {in a short time period}, and then distribute the obtained power allocation solution to the transmitters. Such mechanisms would generate enormous signaling overhead on the backhaul network, and is probably not scalable when the network size grows large.

Recently, a great deal of research efforts have focused on
distributed power allocation for various wireless networks. Game
theory based power allocation techniques were proposed in
\cite{Yu2002,Yu2004,Huang2006,Gesbert2007,Pang2008}, which intend to
compute Nash equilibrium power allocation solutions. However, these
Nash equilibrium solutions might be far from optimality
\cite{Gesbert2007}. Duality-based distributed power allocation
techniques were proposed in
\cite{Xiao04,LinJsac06,Chiang07ProcIEEE}, where the global power
allocation problem is decomposed into many local power allocation
subproblems, each of which can be solved by utilizing only locally
available network information. However, these techniques cannot be
directly applied to CoMP transmissions in DAS --- the local power
allocation subproblem may have many optimum solutions, because the
data rate of CoMP transmission is not strictly concave {with respect to} the power
variables \cite{Xiao04,Bertsekas1999}. Since no global network information is available when
solving the local power allocation subproblems, it is quite difficult
to find a global feasible solution among all the locally optimum
solutions.

One promising method to address this non-strict concavity problem is
the proximal point method \cite{Bertsekas1989}, which adds strictly
concave terms to the objective function without affecting the
optimum solution. However, typical proximal point algorithms require
a two-layer nested iteration structure, where each outer-layer
update can proceed only after the inner-layer iterations converge
\cite{Bertsekas1989}. Such a structure is not suitable for on-line
distributed implementation, because it is difficult to decide in a
distributed manner when the inner-layer iterations can stop. In
\cite{Lin2006}, a single-layer proximal point algorithm was proposed
for multi-path routing problems. However, the convergence analysis
in \cite{Lin2006} {also} cannot be directly utilized for the power
allocation problem considered here, owing to the additional channel
coefficients in our problem. It is difficult to answer whether the
channel coefficients have significant impact on the convergence
behavior of the algorithm mentioned above.

\begin{figure}[t]
    \centering
    \scalebox{0.7}{\includegraphics*[0,0][341,242]{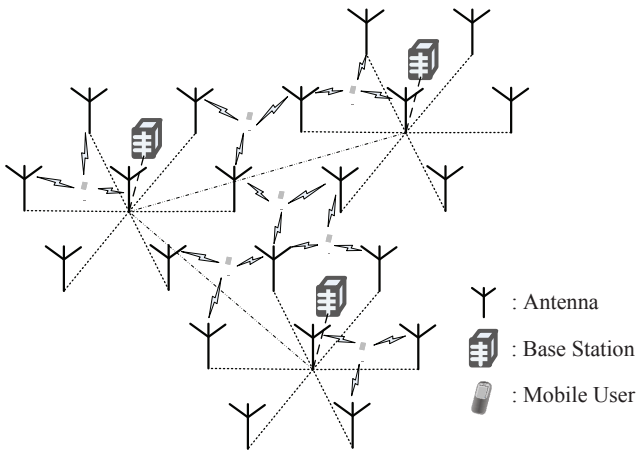}}
    \caption{System model of coordination multipoint (CoMP) transmissions in a distributed antenna system (DAS).} \label{fig1}
    \vspace{-10pt}
\end{figure}

This paper investigates the distributed power allocation problem for
a downlink DAS with many antennas and many single antenna users.
Each user is served by several nearby antennas via CoMP transmission
techniques. Meanwhile, each antenna may serve several users over
orthogonal channels. The main contributions of this paper are
summarized as follows:
\begin{itemize}\item [1)] A distributed power allocation algorithm is proposed to maximize {the weighted sum rate} of the downlink DAS, subject to per-antenna power consumption constraints. This power allocation algorithm is implemented distributedly
among the base stations instead of being executed in a centralized
fashion. The algorithm possesses a nice single-layer iteration
structure, which is desirable for on-line implementations. {In each
iteration, the algorithm only requires local information exchange
among neighboring base stations serving the same user, which is
flexible with respect to network size and topology.}

\item [2)] A novel procedure is proposed to compute the primal optimum solution of the local power allocation subproblem, which is simpler than that proposed in \cite{Lin2006}.

\item [3)] We rigorously analyze the convergence and optimality of the proposed distributed algorithm for the power allocation problem. The bounds of the step-size parameters to ensure convergence are derived.

\item [4)] We show that the step-size parameters of this algorithm are determined by only local user access relationship (i.e., the number of users served by each antenna), but do not rely on channel coefficients\footnote{The channel coefficients are only utilized locally to solve the local power allocation subproblem.}. Therefore, the convergence speed of this algorithm is quite robust to channel fading.
\end{itemize}

Our proposed power allocation algorithm is motivated by the work of
\cite{Lin2006}. However, our work differs from it in several respects.
First, our analysis indicates that a larger step-size can be used for the algorithm in \cite{Lin2006}, which can achieve a faster convergence speed. Second, while our problem has additional channel fading coefficients, we show that the step-size parameters and the convergence speed of our algorithm are robust to different channel fading coefficients. Finally, our procedure for solving the local power allocation problem is simpler than that proposed in \cite{Lin2006}.

For ease of later use, we define the following notations: Let $|S|$ denote the number of elements in set $S$, and let $S/T$
denote the set $S/T=\{x|x\in S, x\notin T\}$. The projection of a
real number $x$ on the set $[0,\infty)$ is defined as $[x]^+ =
\max\{x,0\}$.

The remaining parts of this paper are organized as follows: In
Section \ref{sec:system_model}, the system model and problem
formulation are presented. Section \ref{sec:proposed} presents the
proposed power allocation algorithm and its distributed
implementation. Simulation results of the proposed power allocation
strategy are presented in Section IV. Conclusions are drawn in
Section V.

\section{System Model and Problem Formulation} \label{sec:system_model}
\subsection{System Model}
{We consider} a downlink DAS with $K$ distributed antennas and $N$ single antenna mobile users, which are denoted by $\mathcal {K}=\{1,2,\cdots,K\}$ and $\mathcal {N}=\{1,2,\cdots,N\}$, respectively. Each base station is equipped with several distributed antennas, as illustrated in Fig.~\ref{fig1}. These distributed antennas are connected to the base station via wired backhaul network.
The total throughput of this network is limited by the strong co-channel interference. By allowing several nearby antennas to transmit to one user in a coordinated fashion, the CoMP transmission techniques, such as space-time block coding or maximum
ratio transmission \cite{Park2009}, convert the strong interferences into useful signals and thereby significantly boost the total network throughput. The set of antennas serving the $n$th user is denoted by
$R(n)\subseteq\mathcal {K}$, and the set of users communicating
with the $k$th antenna is expressed as $U(k)\subseteq\mathcal {N}$. In practice, the number of serving antennas for each user, i.e., $|R(n)|$, is usually small, due to the limitation of implementation complexity for CoMP transmissions.

When the density of the distributed antennas is high, CoMP transmissions can not mitigate all the strong interferences, {which results in some strong residual interferences}.
In \cite{Etkin2007}, it was shown that orthogonal transmission is Pareto optimal for strong interference Gaussian channels. Therefore, the users with strong mutual interference should be scheduled to communicate over orthogonal channels via frequency (or time) division multiple access, while geographically separated users with weak mutual interference are allowed to share the same channel resource. This scheduling {task} belongs to the type of timetabling problem, which is a classic problem in the computer science literature with many practical algorithms \cite{Schaerf1999,Qu2009}.

After selecting proper antennas for CoMP transmission and scheduling the users, there are only weak interferences in the network. We consider a slow fading wireless environment.
Let $h_{kn}$ be the complex coefficient of the wireless channel from
the $k$th antenna to the $n$th user and $p_{kn}$ be the transmission power of the $k$th antenna for serving the $n$th user.
The data rate of CoMP transmission to the $n$th user is given by
\begin{equation} \label{eq0}
\mathbf{C}_{n}=\log_2\left(1+\frac{\sum_{k \in R(n)}{}
|h_{kn}|^2p_{kn}}{\sigma^2+\sum_{(k,m)\in I(n)}
|h_{kn}|^2p_{km}}\right),
\end{equation}
where $I(n)$ is the set of source antenna and serving user pair
which may interfere the $n$th user, or more specifically,  $(k,m)\in
I(n)$ represents that the $k$-th source antenna  serving the $m$-th
user through the same serving channel of the $n$th user.

There are two difficulties for utilizing the data rate function
$\mathbf{C}_{n}$ to formulate the power allocation problem: First,
it leads to a non-convex optimization problem that is NP-hard
\cite{Luo2008}, for which one may not be able to find a solution
that is both fast and global optimal even by centralized
optimization. The design of a distributed optimization algorithm
will be even more difficult, if not impossible. In order to reduce
the solution complexity, we need to find an approximate rate
function of $\mathbf{C}_{n}$ that is convex. Second, it can be quite
difficult to attain the exact expression of the rate function
$\mathbf{C}_{n}$. In practice, the number of interfering antennas is
usually much larger than the number of source antennas. Although the
receiver can get an accurate estimation of the channel gain
$|h_{kn}|^2$ for each source antenna $k\in R(n)$, it may be too
demanding to estimate the channel gain from the enormous interfering
antennas, especially when the powers of the interference signals are
weak. On the other hand, estimating the noise-plus-interference
power $\sigma_n^2 \triangleq\sigma^2+\sum_{(k,m)\in I(n)}
|h_{kn}|^2p_{km}$ is obviously much easier.
For these reasons, we consider to utilize an upper bound of the noise-plus-interference power $\sigma_{n}^2$, which is denoted by $\sigma_{n,peak}^2$, to derive an approximate rate function. Let $\gamma_{kn}=|h_{kn}|^2/{\sigma_{n,peak}^2}$
denote the normalized channel gain from the $k$th antenna to the
$n$th user. Then, we derive a conservative rate function
\begin{equation} \label{eq1}
\tilde{\mathbf{C}}_{n}=\log_2\left(1+\sum_{k \in R(n)} p_{kn} \gamma_{kn}\right)\leq {\mathbf{C}}_{n}.
\end{equation}
The key benefit of the conservative rate function
$\tilde{\mathbf{C}}_{n}$ is that it is convex and is computable
without accurate knowledge of the channel gain $|h_{kn}|^2$ for the enormous interfering antennas, which resolves the two difficulties mentioned above. We will illustrate the rate loss for using this conservative rate function to formulate the power allocation problem in Section \ref{sec:simulations}.

\subsection{Problem Formulation}
The rest of this paper focuses on the following power allocation problem to maximize
the {weighted sum rate} of the DAS:
\begin{eqnarray}\label{eq3}
\max_{p_{kn}} && \sum_{n=1}^{N} {w_n}\log_2\left(1+\sum_{k \in R(n)} p_{kn} \gamma_{kn}\right)\\
{\rm s.t.}&&\sum_{n\in U(k)}  p_{kn}\leq P_{k},~k=1,2,\ldots,K,\nonumber\\
&& p_{kn}\geq 0, ~k=1,2,\ldots,K,~n\in U(k), \nonumber
\end{eqnarray}
where {${w_n}>0$ is the weight of the $n$th user's data rate} and $P_{k}$ is the maximal allowable transmission power of the $k$th antenna.

This problem is a convex optimization problem, which can be solved
by standard centralized convex optimization algorithms such as the
interior point method \cite{Boyd2004}. However, these
centralized algorithms are hard to be fulfilled in large-scale DAS, due to the heavy signaling overhead over the backhaul network.
In contrast, duality-based optimization techniques
\cite{Xiao04,LinJsac06,Chiang07ProcIEEE} cannot be directly applied
to this problem, either, because they require the objective function to be strictly concave.
However, the objective function in \eqref{eq3}
is not strictly concave with respect to the transmission power variables, since
it is constant when the value of $\sum_{k \in R(n)} p_{kn} \gamma_{kn}$ is fixed.
If the duality-based optimization techniques
\cite{Xiao04,LinJsac06,Chiang07ProcIEEE} are utilized, the decomposed local power allocation subproblem may have many locally
optimum solutions at some special dual points. It is quite difficult to recover a global feasible solution among all the locally optimum solutions.
When the dual variables are updated around these dual points, the primal power allocation variables keep oscillating and hardly converge (see \cite{Lin2006} for more details).

\section{Distributed Power Allocation Algorithm}\label{sec:proposed}
In this section, we propose a power allocation algorithm to solve the problem (\ref{eq3}), which is distributed among the base stations instead of being centralized over the entire network. The key feature of this algorithm is that its step-size parameters and convergence speed are robust to different channel fading coefficients, which makes our algorithm quite convenient for practical implementations. The details are provided in the following subsections.
\subsection{Single-layer Distributed Power Allocation Algorithm}
To circumvent the aforementioned oscillation
problem, we make use of the idea in the proximal point method
\cite{Bertsekas1989}, which is to add some quadratic terms to the
objective function and make it strictly concave in the primal
variables.
We reformulate the original power allocation problem (\ref{eq3}) as
\begin{align}\label{eq6}
\max_{p_{kn},y_{kn}}~ & \sum_{n=1}^{N}{w_n} \log_2(1+\sum_{k \in R(n)} p_{kn} \gamma_{kn})\nonumber\\
& -\sum_{n=1}^{N}\sum_{k \in R(n)}\frac{c_n}{2}(p_{kn}-y_{kn})^2\\
{\rm s.t.}~~~ &\sum_{n\in U(k)}  p_{kn}\leq P_{k},~k=1,2,...,K,\label{eq5}\\
& p_{kn}\geq 0, ~k=1,2,...,K,~n\in U(k), \nonumber
\end{align}
where we have introduced some quadratic auxiliary terms to make the
objective function strictly concave with respect to the transmission
power variables. Here, $y_{kn}$ is the auxiliary variable
corresponding to $p_{kn}$, $c_n>0$ is the parameter of the quadratic
terms. For notational convenience, let us use {the $|R(n)|$
dimensional vector} $\vec{p}_{n}$ to denote the transmission power
variables of the antennas serving the $n$th users, and {the
$\sum_{n=1}^N|R(n)|$ dimensional vector}
$\vec{p}=[\vec{p}_{1}^T,\vec{p}_{2}^T,\cdots,\vec{p}_{N}^T]^{T}$ to
denote all the transmission power variables. Similarly, we define
{the $|R(n)|$ dimensional vector} $\vec{y}_{n}$ and {the
$\sum_{n=1}^N|R(n)|$ dimensional vector}
$\vec{y}=[\vec{y}_{1}^T,\vec{y}_{2}^T,\cdots,\vec{y}_{N}^T]^{T}$ as
the auxiliary variable vectors corresponding to $\vec{p}_{n}$ and
$\vec{p}$. It is known that the optimum value of the objective
function in \eqref{eq6} coincides with that in (\ref{eq3})
\cite{Bertsekas1989}. In particular, if $\vec{p}^*$ is the optimum
solution to \eqref{eq3}, then $\vec{p}=\vec{p}^*,\vec{y}=\vec{p}^*$
solves (\ref{eq6}).

The standard proximal point method in general has a two-layer
nested optimization structure: the inner layer iterations optimizing
$\vec{p}$ for fixed $\vec{y}$ by a Lagrangian dual optimization
method, and the outer layer iterations optimizing the auxiliary variable $\vec{y}$.
Such a layered structure is not suitable for on-line distributed implementations,
because it is difficult to decide in a distributed manner when the
inner-layer iterations have converged. In the following, we present
a modified proximal point method with a single-layer optimization
structure, where the outer-layer update of $\vec{y}$ does not
request that the inner-layer dual updates have converged.

The Lagrangian of the problem (\ref{eq6}) can be written as:
\begin{align}\label{eq7}
L(\vec{p},\vec{\lambda},\vec{y})=&\sum_{n=1}^{N}{w_n}\log_2(1+\sum_{k
\in R(n)} p_{kn} \gamma_{kn})\nonumber\\
&-\sum_{k=1}^{K}\lambda_{k}(\sum_{n
\in U(k)}p_{kn}-P_{k})\nonumber\\
&-\sum_{n=1}^{N}\sum_{k \in
R(n)}\frac{c_n}{2}(p_{kn}-y_{kn})^2,
\end{align}
where
$\vec{\lambda}=[{\lambda}_{1},{\lambda}_{2},\cdots,{\lambda}_{K}]^{T}$
is the vector of dual variables corresponding to the constraints in
\eqref{eq5}.
Now we are able to present our distributed power allocation
algorithm as the following:

\noindent\textbf{Algorithm A: Single-layer Distributed Power Allocation Algorithm}\\
\noindent At the $t$th iteration,
\begin{itemize}
\item [Step~1:] Dual variable update: \\
    Let $\vec{y}=\vec{y}(t)$ and $\vec{\lambda}=\vec{\lambda}(t)$, maximize $L(\vec{p},\vec{\lambda},\vec{y})$ with respect to $\vec{p}$:
\begin{equation}\label{eq8}
\vec{p}(t)={\textrm{arg}\max}_{\vec{p}\geq 0}
L(\vec{p},\vec{\lambda}(t),\vec{y}(t)).
\end{equation}
    Update the dual variables by
\begin{equation}\label{eq9}
\lambda_{k}(t+1)=[\lambda_{k}(t)+\alpha_k(\sum_{n \in
U(k)}p_{kn}-P_{k})]^{+},
\end{equation}
where $\alpha_k$ is the step-size of the dual update.

\item [Step~2:] Auxiliary variable update: \\
     Let $\vec{y}=\vec{y}(t)$ and $\vec{\lambda}=\vec{\lambda}(t+1)$, maximize $L(\vec{p},\vec{\lambda},\vec{y})$ with respect to $\vec{p}$:
 \begin{equation}\label{eq10}
\vec{z}(t)={\textrm{arg}\max}_{\vec{p}\geq 0}
L(\vec{p},\vec{\lambda}(t+1),\vec{y}(t)).
\end{equation}
    Update the auxiliary variables by
\begin{equation}\label{eq11}
y_{kn}(t+1)=y_{kn}(t)+\beta(z_{kn}(t)-y_{kn}(t)),
\end{equation}
where $0<\beta \leq 1$ is the step-size for auxiliary variable update.
\end{itemize}

The value of $\beta$ can be chosen arbitrarily in $(0,1]$. The choices of $\alpha_k$ to ensure convergence of \textbf{Algorithm A} will be discussed in Section \ref{sec1}. We note that while the convergence analysis in \cite{Lin2006} apply for the degenerated case of $\gamma_{kn}=1$, it is difficult to answer if practical channel coefficients $\gamma_{kn}$ would have significant impact on the convergence behavior of the algorithm. One major contribution of this paper is to show that \emph{the step-size parameters $\alpha_k$ to ensure convergence are irrelevant of $\gamma_{kn}$} (see Section \ref{sec1}). Since the convergence speed of iterative optimization algorithms is mainly affected by the step-size, the convergence speed of our algorithm is quite robust to different values of $\gamma_{kn}$. We will also show that step-sizes larger than those of \cite{Lin2006} can be utilized in our algorithm to achieve a faster convergence speed.

\subsection{Distributed Implementation of Algorithm A}
We proceed to explain how to implement \textbf{Algorithm A} in a
distributed fashion. The Lagrangian maximization problems
(\ref{eq8}) and (\ref{eq10}) can be decomposed into many independent
local power allocation subproblems for each user. Specifically, the terms of the
Lagrangian \eqref{eq7} can be reassembled as
\begin{align}
L(\vec{p},\vec{\lambda},\vec{y})=&\sum_{n=1}^{N}\left[{w_n}\log_2(1+\sum_{k
\in R(n)} p_{kn} \gamma_{kn})-\sum_{k \in R(n)}\lambda_{k}p_{kn}\right.\nonumber\\
&\left.-\sum_{k \in R(n)}\frac{c_n}{2}(p_{kn}-y_{kn})^2\right]+\sum_{k=1}^{K}\lambda_{k}P_{k}.
\end{align}
Therefore, the Lagrangian maximization problems (\ref{eq8}) and
(\ref{eq10}) can be rewritten as
\begin{equation}\label{eq12}
\max_{\vec{p}\geq 0}L(\vec{p},\vec{\lambda},\vec{y}_n)=\sum_{n=1}^{N}\max_{\vec{p}_n\geq 0}B_{n}(\vec{p}_n,\vec{\lambda},\vec{y}_n)+\sum_{k=1}^{K}\lambda_{k}P_{k},
\end{equation}
where
\begin{align}
B_{n}(\vec{p}_n,\vec{\lambda},\vec{y}_n)=&{w_n}\log_2(1+\sum_{k \in R(n)}
p_{kn} \gamma_{kn})-\sum_{k \in R(n)}\lambda_{k}p_{kn}\nonumber\\
&-\sum_{k \in R(n)}\frac{c_n}{2}(p_{kn}-y_{kn})^2.\label{eq13}
\end{align}
Therefore, problems (\ref{eq8}) and (\ref{eq10}) can be decomposed into a series of local power allocation subproblems.

In practice, the resource allocation of a user is carried out at a nearby base station. However, the antennas serving this user may belong to several base stations, as illustrated in Fig.~\ref{fig1}. Therefore, neighboring base stations need to exchange information during the iterations of \textbf{Algorithm A}. The distributed implementation procedure of \textbf{Algorithm A} is described as follows:

At the $t$th iteration of \textbf{Algorithm A}, the base station assigned to the $n$th user first utilizes the channel quality information $\{\gamma_{kn}\}_{k\in
R(n)}$ to solve a subproblem of \eqref{eq8}, given by
\begin{equation}\label{eq14}
\{p_{kn}(t)\}_{k\in R(n)}={\arg\max}_{\vec{p}_n\geq 0}B_{n}(\vec{p}_n,\vec{\lambda}(t),\vec{y}_n(t)),
\end{equation}
and forwards the power allocation solutions $\{p_{kn}(t)\}_{k\in R(n)}$ to nearby base stations controlling the antennas $k\in R(n)$. Then, the base station controlling the $k$th antenna utilizes the power allocation solutions $\{p_{kn}(t)\}_{n\in U(k)}$ to update the dual variable $\lambda_k(t+1)$ according to \eqref{eq9}, and sends $\lambda_k(t+1)$ to the base station assigned to the $n$th user. Next, the base station assigned to the $n$th user solves a subproblem of \eqref{eq10}, i.e.,
\begin{equation}
\{z_{kn}(t)\}_{k\in R(n)}={\arg\max}_{\vec{p}_n\geq 0}B_{n}(\vec{p}_n,\vec{\lambda}(t+1),\vec{y}_n(t)),
\end{equation}
and utilizes the resultant solution $\{z_{kn}(t)\}_{k\in R(n)}$ to update the auxiliary variables $y_{kn}(t+1)$ according to \eqref{eq11}. Therefore, \textbf{Algorithm A} can be implemented in a totally distributed fashion, and it only requires local exchange of the power allocation solution $p_{kn}(t)$ and the dual variable $\lambda_k(t+1)$ among neighboring base stations in each iteration. In addition, when the channel power gain $\gamma_{kn}$ changes, each user sends the updated channel power gain to its assigned base station.

\subsubsection{Solution to Local Power Allocation Subproblem \eqref{eq14}}
The optimum solution to \eqref{eq14} satisfies the following
Karush-Kuhn-Tucker (KKT) conditions \cite{Boyd2004}:
\begin{align}
\frac{\partial B_{n}}{\partial p_{kn}}=&\frac{{w_n}\gamma_{kn}}{\ln2(1+\sum_{k\in R(n)}p_{kn}\gamma_{kn})}-\lambda_{k}\nonumber\\
&-c_n(p_{kn}-y_{kn}) \left\{\begin{array}{l}
=0, ~ \textrm{if}~ p_{kn}>0;\\
\leq 0, ~ \textrm{if}~ p_{kn}=0,\end{array}\right.~~\forall ~k \in R(n).\label{eq15}
\end{align}

Define
\begin{equation}
\Omega(n)=\{k\in R(n)|p_{kn}>0\}
\end{equation}
as the set of antennas
serving the $n$th user with positive power. Hence, $p_{kn}=0$ for
all $k\in R(n)/\Omega(n)$. If $\Omega(n)$ is known, the KKT
conditions in (\ref{eq15}) indicate
\begin{align}
\frac{{w_n}\gamma_{kn}}{\ln2(1{+}\sum_{k\in
\Omega(n)}p_{kn}\gamma_{kn})}-\lambda_{k}-c_n(p_{kn}&-y_{kn})=0,\nonumber\\
&\forall~k\in\Omega(n).\label{eq16}
\end{align}
By conducting a weighted summation of the equations in (\ref{eq16}),
we obtain an equation of $\sum_{k\in
\Omega(n)}p_{kn}\gamma_{kn}$, i.e.,
\begin{align}
\frac{\sum_{k\in \Omega(n)}{w_n}\gamma_{kn}^2}{\ln2(1+\sum_{k\in
\Omega(n)}p_{kn}\gamma_{kn})}&-\!\!\sum_{k\in
\Omega(n)}\gamma_{kn}\lambda_{k}-c_n\!\!\sum_{k\in
\Omega(n)}\gamma_{kn}p_{kn}\nonumber\\
&+c_n\!\!\sum_{k\in \Omega(n)}\gamma_{kn}y_{kn}=0.\label{eq17}
\end{align}
Let us define $s_{n}\triangleq\sum_{k\in \Omega(n)}p_{kn}\gamma_{kn}$,
then (\ref{eq17}) can be reformulated as
\begin{equation}\label{eq18}
c_ns_{n}^{2}+(c_n+\mu_{n})s_{n}+\mu_{n}-\gamma_{n}=0,
\end{equation}
where $\gamma_{n}=\sum_{k\in \Omega(n)}{w_n}\gamma_{kn}^{2}/\ln2$,
$\mu_{n}=\sum_{k\in
\Omega(n)}\gamma_{kn}(\lambda_{k}-c_ny_{kn})$. The root $s_{n}$
of the quadratic equation (\ref{eq18}) is given by
\begin{equation}\label{eq19}
s_{n}=\frac{1}{2c_n}[-(c_n+\mu_{n})+\sqrt{(c_n+\mu_{n})^{2}-4c_n(\mu_{n}-\gamma_{n})}].
\end{equation}
Substituting (\ref{eq19}) into (\ref{eq16}), we obtain the optimum
solution to the subproblem (\ref{eq14}) as
\begin{equation}
{p_{kn}=}\left\{\begin{array}{l l}
y_{kn}+\frac{1}{c_n}\left[\frac{{w_n}\gamma_{kn}}{\ln2(1+s_{n})}-\lambda_{k}\right],&~ \textrm{if}~k\in \Omega(n), \\
0, &~ \textrm{if}~k\in
R(n)/\Omega(n).\end{array}\right.\label{eq20}
\end{equation}

\subsubsection{A Novel Procedure to Derive $\Omega(n)$}
Until now, the left task is to determine $\Omega(n)$ in the optimum solution
to \eqref{eq14}. Let us consider the unconstrained problem
corresponding to \eqref{eq14}, i.e.,
\begin{equation}\label{eq50}
\vec{p}_{n,0}=\arg\max_{\vec{p}_n}B_{n}(\vec{p}_n,\vec{\lambda},\vec{y}_n).
\end{equation}
Our research indicates that if $p_{kn,0}$ in the solution to
\eqref{eq50} satisfies $p_{kn,0}\leq 0$, then the solution to
\eqref{eq14} must satisfy $p_{kn}=0$ (i.e., $k\notin \Omega(n)$).
This statement is expressed in the following lemma:
\begin{lemma}\label{lem1}
Suppose that $g(x)$ is a differentiable concave function on
$x\in[0,\infty)$, and $B(\vec{p})$ is defined as
$B(\vec{p})=g(\sum_{k}p_{k}\gamma_{k})-\sum_{k}(a_{k}p_{k}^{2}+b_{k}p_{k}+c_{k})$
with $a_{k}>0$. If $\{p_{k}^{*}\}={\rm{arg}\max}_{\vec{p}_{}\geq
0}B(\vec{p})$ and
$\{p_{k,0}\}={\rm{arg}\max}_{\vec{p}_{}}B(\vec{p})$, then
$p_{k}^{*}=0$ for any $k$ satisfying $p_{k,0}\leq 0$.
\end{lemma}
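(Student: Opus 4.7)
The plan is to argue by contradiction. Suppose some index $k_0$ violates the claim, i.e.\ $p_{k_0,0}\le 0$ but $p_{k_0}^*>0$. Write $s^*=\sum_k p_k^*\gamma_k$ and $s_0=\sum_k p_{k,0}\gamma_k$. The strategy is to derive the strict inequality $s^*<s_0$ from the assumed violation at the single coordinate $k_0$, and then to get the reverse inequality $s^*\ge s_0$ from a componentwise comparison $p_k^*\ge p_{k,0}$ valid for every $k$.

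First I would write down the KKT/stationarity conditions. Because $g$ is concave and each $-a_k p_k^2$ is strictly concave, $B$ is strictly concave, so both optima are unique and fully characterized by first-order conditions: the unconstrained optimum satisfies $g'(s_0)\gamma_k = 2a_k p_{k,0}+b_k$ for all $k$, while the constrained optimum satisfies $g'(s^*)\gamma_k = 2a_k p_k^*+b_k$ whenever $p_k^*>0$ and $g'(s^*)\gamma_k\le b_k$ whenever $p_k^*=0$. Specializing both identities at the hypothetical $k_0$ gives $g'(s^*)\gamma_{k_0}>b_{k_0}\ge g'(s_0)\gamma_{k_0}$, using $a_{k_0}>0$, $p_{k_0}^*>0$, and $p_{k_0,0}\le 0$. (If $\gamma_{k_0}=0$ the stationarity identity at the constrained optimum forces $p_{k_0}^*=-b_{k_0}/(2a_{k_0})=p_{k_0,0}\le 0$, an immediate contradiction, so I may assume $\gamma_{k_0}>0$.) Dividing by $\gamma_{k_0}$ and invoking concavity of $g$ (so that $g'$ is non-increasing) yields $s^*<s_0$.

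Next I would establish $p_k^*\ge p_{k,0}$ for every $k$. For indices with $p_{k,0}\le 0$ this is trivial from $p_k^*\ge 0$. For indices with $p_{k,0}>0$, the unconstrained identity gives $g'(s_0)\gamma_k>b_k$, so using $g'(s^*)\ge g'(s_0)$ (which follows from $s^*<s_0$) one obtains $g'(s^*)\gamma_k>b_k$; this forces $p_k^*>0$ by KKT, and the constrained identity then yields $p_k^*=[g'(s^*)\gamma_k-b_k]/(2a_k)\ge p_{k,0}$. Multiplying by $\gamma_k\ge 0$ and summing over $k$ gives $s^*\ge s_0$, contradicting the strict inequality from the previous step. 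The contradiction establishes the lemma.

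The main obstacle I expect is bookkeeping rather than a deep difficulty: I need the strict inequality $s^*<s_0$ (not just $\le$) to be able to close the loop, and this requires using $a_{k_0}>0$ and $p_{k_0}^*>0$ to get a strict gap $g'(s^*)\gamma_{k_0}>g'(s_0)\gamma_{k_0}$ at the single coordinate $k_0$, even though $g'$ is only assumed non-increasing (not strictly). Careful handling of the degenerate subcase $\gamma_{k_0}=0$ and the mixed cases in the componentwise comparison is where the argument is most delicate.
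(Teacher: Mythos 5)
Your proof is correct, but it is organized quite differently from the paper's. The paper never argues coordinate by coordinate to order $s_0$ and $s^*$: it forms the weighted sum $\sum_k(\gamma_k/a_k)\,\partial B/\partial p_k$ at each optimum, which collapses the KKT system into one scalar equation at $\vec{p}_0$ and one scalar inequality at $\vec{p}^*$; subtracting these and using that $g'$ is non-increasing gives the \emph{unconditional} ordering $s_0\le s^*$, after which the two stationarity identities at the offending coordinate $k_0$ immediately yield $p_{k_0}^*\le p_{k_0,0}\le 0$, a contradiction. You instead work under the contradiction hypothesis from the outset: the stationarity conditions at $k_0$ alone give the strict gap $g'(s^*)>g'(s_0)$ and hence $s^*<s_0$ (note this is the \emph{reverse} of the paper's intermediate ordering, which is consistent only because your inequality is derived from a hypothesis that turns out to be false), and you close the loop with the componentwise bound $p_k^*\ge p_{k,0}$ summed against $\gamma_k\ge 0$. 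Both routes rest on the same two ingredients --- KKT at both optima and monotonicity of $g'$ --- so the difference is one of decomposition: yours avoids the weighted aggregation at the cost of a short case analysis in the componentwise step, and it is more explicit about the degenerate case $\gamma_{k_0}=0$, which the paper's division-free argument absorbs silently. One caveat you share with the paper: both arguments implicitly require $\gamma_k\ge 0$ (true in the application but not stated in the lemma), and both evaluate $g'$ at $s_0=\sum_k p_{k,0}\gamma_k$, which presumes $s_0$ lies in the domain of $g$.
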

\begin{proof}
See Appendix A.
\end{proof}
With Lemma \ref{lem1}, we are able to compute the optimum choice of
$\Omega(n)$. The detailed procedure is given as follows:
\begin{itemize}
\item [\textbf{P-1.}] Initialization: Set $\Omega(n)=R(n)$.

\item [\textbf{P-2.}] Compute $s_{n}$ and $p_{kn}$ according to (\ref{eq19}) and (\ref{eq20}), respectively.

\item [\textbf{P-3.}] If $p_{kn}> 0$ for all $k\in \Omega(n)$, output $\Omega(n)$ and exit; otherwise, set $\Omega_{(n)}=\{k|p_{kn}>0,k\in R(n)\}$ and return to \textbf{P-2}.
\end{itemize}

\emph{Remark 1:} Lemma \ref{lem1} allows us to rule out
all the elements $k$ with $p_{kn}\leq 0$ from $\Omega(n)$ in one
iteration. Therefore, the proposed procedure can converge much faster than the method proposed in
\cite{Lin2006},\cite{Lin2004}, which is to eliminate only one
element $k$ with the smallest negative $p_{kn}$ in each iteration.
Our method significantly reduces the number of iterations to
compute $\Omega(n)$ and does not require sorting procedure of
$\{p_{kn}\}$.

\subsection{Convergence Analysis}\label{sec1}
In this subsection, we obtain the bounds on the step-sizes $\alpha_k$ to
ensure convergence. First, we need some
notations and definitions to simplify the expressions of our
theoretical analysis. Let us consider the function
\begin{equation}\label{eq61}
f(\vec{p}) =\left\{\begin{array}{l l}\sum_{n=1}^{N} {w_n}\log_2(1+\sum_{k \in R(n)} p_{kn} \gamma_{kn}),&\textrm{if}~p_{kn}\geq 0,\\
-\infty,&\textrm{otherwise},\end{array}\right.
\end{equation}
which has incorporated the power constraint $\vec{p}\geq0$ in the
definition. {The analysis of this subsection applies to any objective function $f(\vec{p})$ in the form of
$\sum_nf_n(\sum_k p_{kn}\gamma_{kn})$ with $f_n(\cdot)$ being a
concave function.}
With \eqref{eq61}, the Lagrangian \eqref{eq7} can be rewritten as
\begin{equation}
\tilde{L}(\vec{p},\vec{\lambda},\vec{y})=f(\vec{p})-\vec{p}^{T}E^{T}\vec{\lambda}-\frac{1}{2}(\vec{p}-\vec{y})^{T}V(\vec{p}-\vec{y})+\vec{\lambda}^{T}\vec{R},
\label{eq23}
\end{equation}
where $E$ is a $K\times \sum_{j=1}^N |R_{(j)}|$ dimensional matrix
with binary elements representing the relationship between the antennas
and their transmit power variables{, i.e., if the $k$th antenna is selected to serve the $n$th user, one of the $|R(n)|$ elements on the $k$th row and the $(\sum_{j=1}^{n-1} |R_{(j)}|+1)$th to the $(\sum_{j=1}^{n} |R_{(j)}|)$th columns is 1; otherwise, all of these $|R(n)|$ elements are 0. Moreover, it satisfies
\begin{equation}\label{eq2}
\sum_{i=1}^{\sum_{j=1}^N |R_{(j)}|} E_{ki}=|U(k)|,~~\forall~i, ~~\sum_{k=1}^{K} E_{ki}=1,~~\forall~k,
\end{equation}
because the $k$th antenna serves $|U(k)|$ users and each transmit power variable belongs to only one antenna.
}$V$ is a
$\sum_{j=1}^N |R_{(j)}| \times \sum_{j=1}^N |R_{(j)}|$ diagonal
matrix with diagonal elements $c_n$ representing the parameters of
the quadratic terms.
$\vec{R}=[P_1,P_2,...,P_K]^{T}$ represents the vector of maximal
transmission power of the antennas. Therefore, the Lagrangian
maximization problems (\ref{eq8}) and (\ref{eq10}) can be expressed
as
\begin{equation}
\vec{p}(t)={\textrm{arg}\max}_{\vec{p}_{}}
\tilde{L}(\vec{p},\vec{\lambda}(t),\vec{y}(t)),
\end{equation}
and
\begin{equation}
\vec{z}(t)={\textrm{arg}\max}_{\vec{p}_{}}
\tilde{L}(\vec{p},\vec{\lambda}(t+1),\vec{y}(t)),
\end{equation}
respectively. Let $A$ be a $K \times K$ diagonal
matrix with diagonal elements $\alpha_k$ representing the step size
for dual update. Let $B$ be a $\sum_{j=1}^N |R_{(j)}|
\times \sum_{j=1}^N |R_{(j)}|$ diagonal matrix with diagonal
elements $\beta$ representing the step size for auxiliary update.
Then, the dual update (\ref{eq9}) and auxiliary update (\ref{eq10})
can be rephrased as
\begin{equation}
\vec{\lambda}(t+1)=[\vec{\lambda}(t)+A(E\vec{p}(t)-\vec{R})]^{+},
\label{eq24}
\end{equation}
and
\begin{equation}
\vec{y}(t+1)=\vec{y}(t)+B(\vec{z}(t)-\vec{y}(t)). \label{eq25}
\end{equation}

We also need to define the stationary point of \textbf{Algorithm A}.
\begin{Def} A point $(\vec{y}^{*},\vec{\lambda}^{*})$ is a
stationary point of {\bf{Algorithm A}}, if
\begin{eqnarray}
&&\vec{y}^{*}=\arg \max_{\vec{p}_{}}
\tilde{L}(\vec{p},\vec{\lambda}^{*},\vec{y}^{*}), \label{eq26}\\
&&E\vec{y}^{*}-\vec{R}\leq 0,~\vec{\lambda}^{*}\geq 0,
\label{eq27}\\
&&\vec{\lambda}^{*}\otimes(E\vec{y}^{*}-\vec{R})=0, \label{eq28}
\end{eqnarray}
where $\vec{x} \otimes \vec{y}$ represents the Hadamard
(elementwise) product of two vectors $\vec{x}$ and $\vec{y}$ with
the same dimension.
\end{Def}

Let us further consider a Lagrangian maximization problem
$\vec{p}=\arg \max_{\vec{p}_{}}
\tilde{L}(\vec{p},\vec{\lambda},\vec{y})$. The KKT conditions
suggest that there must exist a subgradient $\nabla f(\cdot)$ of
$f(\cdot)$ satisfying
\begin{eqnarray}\label{eq51}
\nabla f(\vec{p})-E^T\vec{\lambda}-V(\vec{p}-\vec{y})=0.
\end{eqnarray}
Similarly, let $(\vec{y}^{*},\vec{\lambda}^{*})$ denote a stationary
point of \textbf{Algorithm A}, then we can get from \eqref{eq26}
that
\begin{eqnarray}\label{eq52}
\nabla f(\vec{y}^*)-E^T\vec{\lambda}^*=0.
\end{eqnarray}

Now we are ready to introduce the main result of this paper in the following theorem, i.e., the {sufficient} condition for the convergence of \textbf{Algorithm A}.
\begin{Theorem}\label{thm1}
{If the objective function $f(\vec{p})$ is in the form of
$\sum_nf_n(\sum_k p_{kn}\gamma_{kn})$ with $f_n(\cdot)$ being a
concave function,} and the step-size $\alpha_k$ satisfies
\begin{eqnarray}\label{eq62}
\alpha_k\leq\frac{2\min_{\{n\in U(k)\}}c_n}{3|U(k)|},
\end{eqnarray}
where $|U(k)|$ is the number of users served by the $k$th antenna, the proposed
distributed power allocation \textbf{Algorithm A} will converge to
a stationary point $(\vec{y}^{*},\vec{\lambda}^{*})$ of the
algorithm, and $\vec{p}^{*}=\vec{y}^{*}$ is an optimum solution.
\end{Theorem}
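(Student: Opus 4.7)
The plan is to prove convergence by constructing a Lyapunov-type potential function measuring the distance of the persistent iterates $(\vec{y}(t),\vec{\lambda}(t))$ to a stationary point $(\vec{y}^{*},\vec{\lambda}^{*})$ in the step-size-weighted norms induced by $V$, $B$ and $A$, and showing that this potential strictly decreases under the step-size bound \eqref{eq62}. A natural candidate is
\[
\Phi(t) = \tfrac{1}{2\beta}\|\vec{y}(t) - \vec{y}^{*}\|_{V}^{2} + \tfrac{1}{2}\|\vec{\lambda}(t) - \vec{\lambda}^{*}\|_{A^{-1}}^{2},
\]
where the weights are chosen so that the linear cross-terms produced by expanding the dual and auxiliary updates cancel cleanly against the first-order conditions \eqref{eq51} and \eqref{eq52}.

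The derivation proceeds in three steps. First, expand $\Phi(t+1)-\Phi(t)$ using the updates \eqref{eq24} and \eqref{eq25} together with non-expansiveness of $[\,\cdot\,]^{+}$ with respect to any $\vec{\lambda}^{*}\ge 0$; this produces linear terms in $E\vec{p}(t)-\vec{R}$ and $\vec{z}(t)-\vec{y}(t)$, plus quadratic residuals $\tfrac{1}{2}\|E\vec{p}(t)-\vec{R}\|_{A}^{2}$ and $\tfrac{\beta}{2}\|\vec{z}(t)-\vec{y}(t)\|_{V}^{2}$. Second, invoke the two first-order conditions \eqref{eq51} (one at $\vec{p}(t)$ with $\vec{\lambda}(t)$, one at $\vec{z}(t)$ with $\vec{\lambda}(t+1)$) together with the stationary conditions \eqref{eq26}--\eqref{eq28}. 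Concavity of $f$ gives $(\vec{p}(t)-\vec{y}^{*})^{T}(\nabla f(\vec{p}(t))-\nabla f(\vec{y}^{*}))\le 0$ and its analogue at $\vec{z}(t)$, and complementary slackness gives $\vec{\lambda}(t)^{T}(E\vec{y}^{*}-\vec{R})\le 0$; together these rewrite every linear cross-term as a non-positive expression plus quadratic forms in the residuals $\vec{p}(t)-\vec{y}^{*}$, $\vec{z}(t)-\vec{y}^{*}$ and $\vec{y}(t)-\vec{y}^{*}$.

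The algebraic crux, and the step where I expect the main difficulty, is bounding $\|E\vec{u}\|_{A}^{2}$ for $\vec{u}=\vec{p}(t)-\vec{y}^{*}$ (and its analogue for $\vec{z}(t)$) against the strong-concavity term $\|\vec{u}\|_{V}^{2}$ supplied by the proximal regulariser. Using the column-sparsity of $E$ recorded in \eqref{eq2}, a single Cauchy--Schwarz yields
\[
(E\vec{u})_{k}^{2} = \Big(\sum_{n\in U(k)} u_{kn}\Big)^{2} \leq |U(k)|\sum_{n\in U(k)} u_{kn}^{2},
\]
so that $\|E\vec{u}\|_{A}^{2}\le \sum_{n}\sum_{k\in R(n)}\alpha_{k}|U(k)|\,u_{kn}^{2}$. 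Note that the channel coefficients $\gamma_{kn}$ never enter this bound, which is the algebraic source of the channel-robustness claim in the theorem. The condition $\alpha_{k}|U(k)|\le (2/3)\min_{n\in U(k)} c_{n}$ is precisely what is needed so that three quadratic residuals — one from the dual expansion, one from relating $\vec{p}(t)$ to $\vec{y}^{*}$ through the proximal term, and one from the analogous step on $\vec{z}(t)$ — can be simultaneously dominated by the single strong-concavity term, using a Young-type split of the form $(a+b+c)^{2}\le 3(a^{2}+b^{2}+c^{2})$. Getting this three-way split tight while also tracking the factor $\beta\in(0,1]$ from the $\vec{y}$-update is the most delicate book-keeping.

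Once the Lyapunov descent $\Phi(t+1)\le \Phi(t)-\kappa\Delta(t)$ is established, with $\Delta(t)$ a non-negative combination of $\|\vec{p}(t)-\vec{y}^{*}\|^{2}$, $\|\vec{z}(t)-\vec{y}^{*}\|^{2}$ and $\|\vec{y}(t)-\vec{y}^{*}\|^{2}$, the remainder is standard: monotone decrease and non-negativity of $\Phi$ bound the iterates and force $\Delta(t)\to 0$ by telescoping; continuity of the primal solution maps \eqref{eq8} and \eqref{eq10} forces every cluster point of $\{(\vec{y}(t),\vec{\lambda}(t))\}$ to satisfy \eqref{eq26}--\eqref{eq28}; and the weighted-norm decrease promotes subsequential convergence to full-sequence convergence. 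Finally, at any stationary point the proximal terms in \eqref{eq6} vanish because $\vec{p}^{*}=\vec{y}^{*}$, so the KKT conditions collapse to those of \eqref{eq3}, confirming that $\vec{p}^{*}=\vec{y}^{*}$ is a global optimum of the original problem.
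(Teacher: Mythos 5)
Your Lyapunov skeleton is the right one and matches the paper's: the potential $\|\vec{\lambda}(t)-\vec{\lambda}^{*}\|_{A^{-1}}^{2}+\beta^{-1}\|\vec{y}(t)-\vec{y}^{*}\|_{V}^{2}$ is (up to constants) exactly the function $v(\vec{y}(t),\vec{\lambda}(t))$ used in Appendix C, the expansion via non-expansiveness of $[\cdot]^{+}$ is the same, and your observation that $(E\vec{u})_k^2\le|U(k)|\sum_{n\in U(k)}u_{kn}^2$ is essentially the computation the paper uses to show $C=A^{-1}-\tfrac{3}{2}EV^{-1}E^{T}\succeq 0$. However, there is a genuine gap at the step you describe as "invoke concavity to make every cross-term non-positive." The dangerous cross-term that survives the expansion is not a matched pairing $[\nabla f(\vec{p}(t))-\nabla f(\vec{y}^{*})]^{T}(\vec{p}(t)-\vec{y}^{*})$, which concavity does kill; it is the \emph{mismatched} term $[\nabla f(\vec{z}(t))-\nabla f(\vec{y}^{*})]^{T}(\vec{p}(t)-\vec{y}^{*})$, where the gradient is evaluated at the maximizer for $\vec{\lambda}(t+1)$ but paired with the displacement of the maximizer for $\vec{\lambda}(t)$. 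Concavity gives you nothing here, and this term is precisely the price of the single-layer structure (the $\vec{y}$-update uses a primal point computed at a different dual point within the same iteration). Controlling it is the content of the paper's Lemma 2, which bounds it by $\tfrac{1}{4}(\vec{\lambda}(t+1)-\vec{\lambda}(t))^{T}EV^{-1}E^{T}(\vec{\lambda}(t+1)-\vec{\lambda}(t))$ using the KKT identities at both maximizers, the separable form $f=\sum_n f_n(\sum_k p_{kn}\gamma_{kn})$, a completion of the square, and a Cauchy--Schwarz step in which the $\gamma_{kn}$ cancel. Your proposal contains no substitute for this lemma, so the drift inequality does not close.

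A related misattribution: you locate the channel-robustness of the step-size bound in the $E$-matrix sparsity estimate, but that estimate never involved $\gamma_{kn}$ to begin with. The place where the channel coefficients genuinely threaten to enter is the mismatched cross-term above, and they are eliminated only inside the proof of Lemma 2 via $\bigl[\sum_{k}(\lambda_{k,1}-\lambda_{k,2})\gamma_{kn}\bigr]^{2}\le\sum_{k}\gamma_{kn}^{2}\sum_{k}(\lambda_{k,1}-\lambda_{k,2})^{2}$, which cancels the factor $\sum_{k}\gamma_{kn}^{2}$ in the denominator. Likewise, the constant $3$ in \eqref{eq62} does not arise from a Young split $(a+b+c)^{2}\le 3(a^{2}+b^{2}+c^{2})$; it is $\tfrac{3}{2}=1+2\cdot\tfrac{1}{4}$, the coefficient $1$ coming from the quadratic residual of the dual update and $2\cdot\tfrac{1}{4}$ from applying Lemma 2 to the doubled cross-term. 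Without identifying and proving something equivalent to Lemma 2, the proposal cannot reach the stated step-size bound.
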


The proof of Theorem \ref{thm1} relies on the following key result:
\begin{lemma}
\label{lem2} Let $(\vec{p}_1,\vec{\lambda}_1)$ and
$(\vec{p}_2,\vec{\lambda}_2)$ be two maximizers of the Lagrangian
\eqref{eq23} for fixed auxiliary variable $\vec{y}$, i.e.,
$\vec{p}_1=\arg\max\limits_{\vec{p}}\tilde{L}\left(\vec{p},\vec{y},\vec{\lambda}_1\right)$
and
$\vec{p}_2=\arg\max\limits_{\vec{p}}\tilde{L}\left(\vec{p},\vec{y},\vec{\lambda}_2\right)$,
and $(\vec{y}^{*},\vec{\lambda}^{*})$ is a stationary point of
\textbf{Algorithm A}, then
\begin{eqnarray}\label{eq78}
&&\left[\nabla  f\left(\vec{p}_1\right)-\nabla  f\left(\vec{y}^*\right)\right]^T\left(\vec{p}_2-\vec{y}^*\right)\nonumber\\
\leq\!\!\!\!\!\!\!\!&&\frac{1}{4}\left(\vec{\lambda}_2-\vec{\lambda}_1\right)^TEV^{-1}E^T\left(\vec{\lambda}_2-\vec{\lambda}_1\right),
\end{eqnarray}
where $\nabla f(\vec{p}_1)$ and $\nabla  f\left(\vec{y}^*\right)$
are defined in \eqref{eq51} and \eqref{eq52}.
\end{lemma}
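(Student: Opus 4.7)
My plan is to leverage the three first-order (KKT) identities that define $\vec{p}_1$, $\vec{p}_2$, and $\vec{y}^*$, namely $\nabla f(\vec{p}_i) = E^T\vec{\lambda}_i + V(\vec{p}_i - \vec{y})$ for $i=1,2$ and $\nabla f(\vec{y}^*) = E^T\vec{\lambda}^*$, in combination with the concavity of $f$, which yields the monotone-gradient inequality $[\nabla f(\vec{a}) - \nabla f(\vec{b})]^T(\vec{a}-\vec{b})\leq 0$ for any $\vec{a},\vec{b}$.

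First I would decompose $\vec{p}_2 - \vec{y}^* = (\vec{p}_2 - \vec{p}_1) + (\vec{p}_1 - \vec{y}^*)$ and use concavity to drop the non-positive piece $[\nabla f(\vec{p}_1) - \nabla f(\vec{y}^*)]^T(\vec{p}_1 - \vec{y}^*)\leq 0$, reducing the left-hand side to $[\nabla f(\vec{p}_1) - \nabla f(\vec{y}^*)]^T(\vec{p}_2 - \vec{p}_1)$. Then I would further split the gradient difference as $[\nabla f(\vec{p}_1) - \nabla f(\vec{p}_2)] + [\nabla f(\vec{p}_2) - \nabla f(\vec{y}^*)]$. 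The first piece can be rewritten exactly, via (A)$-$(B), as $E^T(\vec{\lambda}_1-\vec{\lambda}_2) + V(\vec{p}_1-\vec{p}_2)$; the cross term from the second piece is treated by a second application of concavity that compares $\nabla f(\vec{p}_2)-\nabla f(\vec{y}^*)$ against $(\vec{p}_2-\vec{y}^*)$.

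After these substitutions the bound collapses to the quadratic form $-\|V^{1/2}(\vec{p}_1-\vec{p}_2)\|^2 - (\vec{\lambda}_2-\vec{\lambda}_1)^T E(\vec{p}_1-\vec{p}_2)$. Setting $\vec{r}=V^{1/2}(\vec{p}_2-\vec{p}_1)$ and $\vec{s}=V^{-1/2}E^T(\vec{\lambda}_2-\vec{\lambda}_1)$, this becomes $-\|\vec{r}\|^2 - \vec{s}^T\vec{r}$, and completing the square gives
$$-\|\vec{r}\|^2 - \vec{s}^T\vec{r} \;=\; \tfrac{1}{4}\|\vec{s}\|^2 - \|\vec{r} + \tfrac{1}{2}\vec{s}\|^2 \;\leq\; \tfrac{1}{4}\|\vec{s}\|^2,$$
which is exactly the claimed bound $\tfrac{1}{4}(\vec{\lambda}_2-\vec{\lambda}_1)^T E V^{-1} E^T(\vec{\lambda}_2-\vec{\lambda}_1)$. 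The factor $\tfrac{1}{4}$ is manifestly the Young's constant produced by this square-completion.

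The hard part, I expect, is that the symmetric nature of the argument naturally delivers an upper bound on $[\nabla f(\vec{p}_1)-\nabla f(\vec{y}^*)]^T(\vec{p}_2-\vec{y}^*) + [\nabla f(\vec{p}_2)-\nabla f(\vec{y}^*)]^T(\vec{p}_1-\vec{y}^*)$, and isolating only the first (one-sided) form requires an additional argument to show that the residual cross term is non-positive. To close this gap, I would exploit the separable structure $f(\vec{p})=\sum_n f_n(\sum_k p_{kn}\gamma_{kn})$, which reduces the bookkeeping to scalar concavity of each $f_n$ indexed by user $n$, and invoke the stationary-point conditions \eqref{eq27}--\eqref{eq28} (feasibility $E\vec{y}^*\leq\vec{R}$ together with complementary slackness $\vec{\lambda}^{*T}(E\vec{y}^*-\vec{R})=0$) to cancel the residual contributions carrying $\vec{\lambda}^*$ and $\vec{y}$ that otherwise survive after the algebraic manipulation.
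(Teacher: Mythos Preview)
Your vector-level manipulation is clean as far as it goes, but it does not actually collapse to $-\|\vec r\|^2-\vec s^{\,T}\vec r$ for the one-sided quantity $X_{12}:=[\nabla f(\vec p_1)-\nabla f(\vec y^*)]^T(\vec p_2-\vec y^*)$. After you drop $X_{11}\le 0$ and split the gradient, the residual piece is $[\nabla f(\vec p_2)-\nabla f(\vec y^*)]^T(\vec p_2-\vec p_1)$, which is \emph{not} $[\nabla f(\vec p_2)-\nabla f(\vec y^*)]^T(\vec p_2-\vec y^*)$; concavity controls the latter, not the former. If you further expand $(\vec p_2-\vec p_1)=(\vec p_2-\vec y^*)-(\vec p_1-\vec y^*)$, what survives is exactly $-X_{21}$, so your scheme yields only the symmetric bound $X_{12}+X_{21}\le\tfrac14\|\vec s\|^2$, as you already suspect. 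The missing step is a genuine obstacle at the vector level: for a general concave $f$ there is no reason for $X_{21}\ge 0$.

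Your proposed fixes are half right. Exploiting the separable form $f(\vec p)=\sum_n f_n(\sum_k p_{kn}\gamma_{kn})$ is exactly what makes the one-sided bound possible, but the complementary-slackness/feasibility conditions \eqref{eq27}--\eqref{eq28} are a red herring: $\vec R$ never appears in \eqref{eq78}, and the paper's proof does not touch those conditions. The paper works per user with the \emph{scalar} pairs $a_{n,i}=\nabla f_n(\sum_k p_{kn,i}\gamma_{kn})-\nabla f_n(\sum_k y^*_{kn}\gamma_{kn})$ and $b_{n,i}=\sum_k (p_{kn,i}-y^*_{kn})\gamma_{kn}$, so that the left side of \eqref{eq78} is $\sum_n a_{n,1}b_{n,2}$. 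The crucial move---which has no vector-level analogue---is to multiply $a_{n,1}b_{n,2}$ by the factor $1-\dfrac{c_n b_{n,1}}{a_{n,1}\sum_{k\in R(n)}\gamma_{kn}^2}\ge 1$ (nonnegativity of the added piece is precisely the scalar concavity inequality $a_{n,1}b_{n,1}\le 0$). After this multiplication, the $\gamma_{kn}$-weighted sum of the two KKT identities for $\vec p_1,\vec p_2$ (same $\vec y$, so the $y$-terms cancel) turns the expression into a quadratic in the single scalar $b_{n,2}$, which is then bounded by completing the square and a Cauchy--Schwarz step on $\sum_{k\in R(n)}(\lambda_{k,1}-\lambda_{k,2})\gamma_{kn}$. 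That scalar multiplication trick is the idea you are missing; without it your outline does not yield the asymmetric inequality.
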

\begin{proof}
See Appendix B.
\end{proof}
With Lemma \ref{lem2}, we are able to prove Theorem 1. The details
are relegated to Appendix C. Some remarks about Theorem 1 are provided as follows:

\emph{Remark 2:} If we choose
$\alpha_k=\frac{2\min_{\{n\in U(k)\}}c_n}{3|U(k)|}$, then the
step-size parameters $\alpha_k$ do not rely on the channel coefficients $\gamma_{kn}$. On the contrary, they are
only determined by the number of users served by the $k$th antenna,
i.e., $|U(k)|$. Since the convergence speed of iterative optimization algorithms is mainly affected by the step-size, the convergence speed of our algorithm is quite robust to different values of $\gamma_{kn}$.

\emph{Remark 3:} It is worthwhile to note that the channel fading coefficients $\gamma_{kn}$
is involved in $f(\vec{p})$ on the left hand side of \eqref{eq78}, by not in
the right hand side of \eqref{eq78}.
This is the key reason that the bound on the step-size $\alpha_k$ in
Theorem \ref{thm1} is irrelevant to $\gamma_{kn}$.

\emph{Remark 4:} In \cite[Lemma 3]{Lin2006}, the authors proved that
\begin{equation}\label{eq34}
[\nabla f(\vec{p}_1)-\nabla
f(\vec{y}^{*})]^{T}(\vec{p}_2-\vec{y}^{*})\leq
\frac{1}{2}(\vec{\lambda}_1-\vec{\lambda}_2)^{T}EV^{-1}E^{T}(\vec{\lambda}_1-\vec{\lambda}_2),
\end{equation}
for the degenerated case of $\gamma_{kn}=1$. One can see that
\eqref{eq34} is looser than the inequality \eqref{eq78} in Lemma
\ref{lem2}. Moreover, in \cite[Proposition 4]{Lin2006},
the authors only proved the convergence of their algorithm for the step-sizes ${\alpha}_k=\frac{\min_{\{n\in \mathcal
{N}\}}c_n}{2\max_{\{k\in\mathcal {K}\}} |U(k)|}$, which {are smaller than} the step-sizes of our algorithm, i.e.,  $\alpha_k=\frac{2\min_{\{n\in U(k)\}}c_n}{3|U(k)|}$.
Therefore, our algorithm can
achieve a faster speed of convergence than that of
\cite{Lin2006}. Some numerical results will be provided in the next section to illustrate this.

{\emph{Remark 5:} Theorem \ref{thm1} provides a sufficient condition for the convergence of \textbf{Algorithm A}, for all the system circumstances. According to our simulation experiences, there exist some circumstances that larger step-sizes than those of \eqref{eq62} can also obtain an optimum solution to problem (\ref{eq3}). However, it is difficult to prove that such weaker conditions ensure the convergence of \textbf{Algorithm A} uniformly for all the system circumstances.}

{\emph{Remark 6:} If the channel gains change before convergence as
in the slow fading environment, the resultant power allocation
solution may not be optimum. However, \textbf{Algorithm A} is able
to track the changes of the slow fading environment to some extent.
For example, suppose that the channel gains change after the
algorithm has reached a near optimum solution. We can still use the
dual variable and auxiliary variable of the last iteration as the
initial state of the subsequent iterations. As long as the changes
of channel gains are small, the dual variable and auxiliary variable
of the last iteration is not far from the optimum solution, and the
number of iterations for convergence is much smaller than using a
random initial state.}

\section{Numerical Simulations}
\label{sec:simulations} In this section, we present some simulation
results to demonstrate the efficiency of the proposed power
allocation algorithm. We consider a downlink DAS with 7 cells. Each
cell is equipped with 7 distributed antennas, including 1 antenna
locating at the center of the cell and 6 remote antennas distributed
near the boundary of the cell. Similar with Fig.~\ref{fig1}, the
locations of these 49 antennas form a hexagonal lattice. The minimal
distance between two neighboring antennas is $D=1000$ meters. The
users are distributed uniformly in the entire network area, with the
extra constraint that the distance from a user to a nearest antenna
is no smaller than 10 meters.
The wireless channel coefficients are composed by
three components: large-scale path loss, shadowing, and small-scale
Rayleigh fading. The path loss and shadowing are determined by the SCM model for Urban Macro environments \cite{3GPP}. Specifically, the path loss is given by $PL=34.5+35\log_{10}(d)$,
where $d$ is the distance in meters between the user and the
antenna. The shadowing component satisfies a log-normal distribution with zero mean and a standard deviation of 8 dB.
For downlink CoMP transmissions, each user is served by $|R(n)|=3$ antennas, which are selected based on large-scale channel path loss. The maximal transmission power of each antenna is assumed to be the same, i.e., $P_k=P$. {The data rate weights are chosen as $w_n=1$.} 
Two users are allowed to be scheduled on the same channel, if they are served by different antennas. The bandwidth of each receiver is 1MHz, and the noise figure of each receiver is 5 dB.
The conservative noise-plus-interference power $\sigma_{n,peak}^2$ is chosen to be 5 dB greater than the noise power. Therefore, the noise-plus-interference power at each receiver is $\sigma_{n,peak}^2=-174+60+5 + 5 =-104$ dBm. We utilize $\tilde{\mathbf{C}}_{n}$ in \eqref{eq1} to formulate the power allocation problem \eqref{eq3}. After the power allocation solution is derived, we substitute it into the original rate function $\mathbf{C}_{n}$ in \eqref{eq0} to compute the achievable data rate. All the simulation results are obtained by averaging over 1000 system realizations.

We compare our proposed power allocation strategy for problem \eqref{eq3} with the following 2 reference strategies: The first strategy considers the
optimal power allocation for downlink CoMP transmissions with
\emph{no} interference, which provides a performance upper bound of the
practical scenarios with interference. The second one is a simple equal power allocation (EPA) strategy, where each antenna allocates its transmission power equally to serve its users.

Figure~\ref{fig2} illustrates the simulation results of per-user throughput
versus transmission power $P$ for different power allocation strategies, where each cell has 10 users.

\begin{figure}[t]
    \centering
    \scalebox{0.6}{\includegraphics*[86,261][506,577]{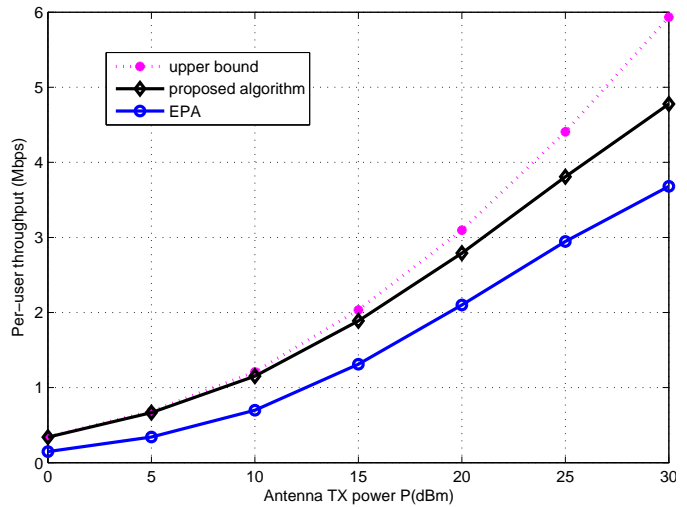}}
    \caption{Simulation results of per-user throughput versus transmission power $P$ for $N=70$.} \label{fig2}
\end{figure}

Figure~\ref{fig3} presents the simulation results of per-user throughput
versus the number of users per cell $N/7$, where the transmission
power $P=20$ dBm. One can observe that the proposed power allocation
strategy has a small gap from the performance upper bound,
especially when the transmission power $P$ is small. However, the
simple equal power allocation scheme has a lower throughput.
The performance of equal power allocation is poor, because the
wireless links from different antennas to one user have quite
different channel quality. The base station should spend more power
on the strong wireless links, instead of using the same power for
different wireless links. Through careful user scheduling and
setting reasonable threshold of noise amplification, the proposed
algorithm can achieve performance approaching to that of the ideal
non-interference scenario. Therefore, the proposed power allocation
strategy plays an essential role to realize the benefits of downlink
CoMP transmissions in DAS.

\begin{figure}[t]
    \centering
    \scalebox{0.6}{\includegraphics*[86,261][510,578]{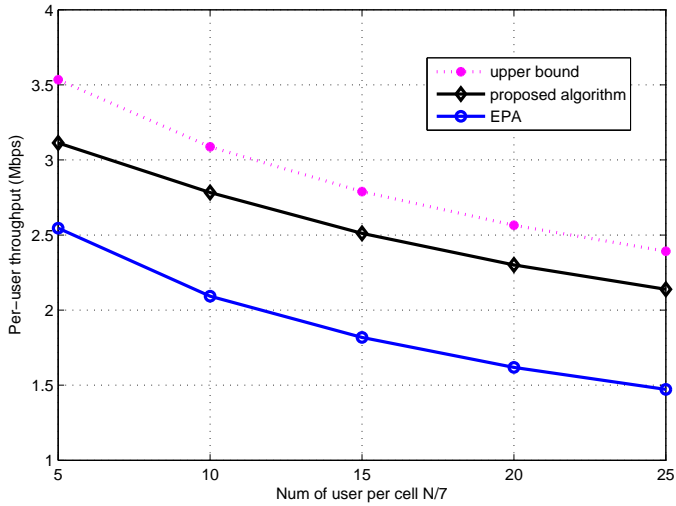}}
    \caption{Simulation results of per-user throughput versus the number of users per cell $N/7$ for $P=20$dBm.} \label{fig3}
\end{figure}

Figure~\ref{fig4} illustrates the evolutions of the dual optimality gap of
the proposed power allocation algorithm and the distributed
optimization algorithm of \cite{Lin2006} for $N=175$ and $P=30$dBm,
where the dual optimality gap is given by
$L(\vec{p}(t),\vec{\lambda}(t),\vec{y}(t))-f(\vec{y}^*)$. The
parameters of our distributed power allocation algorithm are chosen
as $c_n=3$, $\alpha_k=\frac{2\min_{\{n\in U(k)\}}c_n}{3|U(k)|}$, and
$\beta=1$. The parameters of the algorithm of \cite{Lin2006} are
given by $c_n=3$, $\alpha_k=\frac{\min_{\{n\in\mathcal
{N}\}}c_n}{2\max_{\{k\in\mathcal {K}\}}|U(k)|}$ (see \emph{Remark
4}), and $\beta=1$. Since the step-sizes of our proposed power
allocation algorithm are larger than the reference algorithm in
\cite{Lin2006}, our algorithm exhibits a faster convergence speed.
We note that this is the convergence speed when the algorithm
is cold started. In practice, since the channel condition varies
slowly, the power allocation solution from the previous run of the
algorithm is an excellent initial state for warm-starting the
algorithm. By this, the algorithm generally converges much faster.

\begin{figure}[t]
    \centering
    \scalebox{0.6}{\includegraphics*[86,261][510,578]{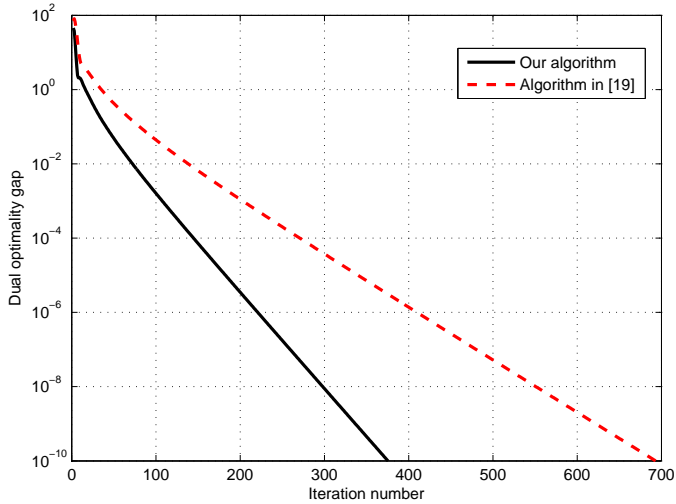}}
    \caption{Convergence of our proposed algorithm and the algorithm in \cite{Lin2006} for $N=175$ and $P=25$dBm.} \label{fig4}
\end{figure}

\section{Conclusions}
\label{sec:conclusion} We have proposed a distributed power
allocation algorithm for downlink CoMP transmissions in DAS. We
considered an approximate power allocation problem with a
non-strictly concave objective function, which makes traditional
duality-based optimization techniques not applicable for this
problem. We have resolved this non-strict concavity issue by adding
some quadratic terms to make the objective function strictly
concave, and developed a distributed algorithm to solve the power
allocation problem. A key merit of this algorithm is that its
convergence speed is robust to different values of the channel
coefficients. Its implementation only requires local information
exchange among neighboring base stations serving the same
user. The convergence and optimality of this algorithm has been
established rigorously. Our simulation results have revealed that
significant throughput improvements can be realized by this power
allocation algorithm.

\appendices {\setcounter{equation}{0}
\renewcommand{\theequation}{A.\arabic{equation}}
\section{Proof of Lemma 1}

Since $g(x)$ is a concave function of $x$,
$B(\vec{p})=g(\sum_{k}p_{k}\gamma_{k})-\sum_{k}(a_{k}p_{k}^{2}+b_{k}p_{k}+c_{k})$
is also concave with respect to $\vec{p}$. The KKT conditions
indicate
\begin{equation}
\left.\frac{\partial B}{\partial p_k}\right|_{p_k = p_{k,0}}=0,
\label{eq35}
\end{equation}
and
\begin{equation} \label{eq36}
\left.\frac{\partial B}{\partial p_k}\right|_{p_k = p^*_{k}}\left\{\begin{aligned}
         =0 &~~~{\rm if}~p^{*}_k>0, \\
                  \leq 0
                  &~~~{\rm if}~p^{*}_k=0.
                          \end{aligned} \right.
                          \end{equation}

By taking the weighted summation of the partial derivations
$\frac{\partial B}{\partial p_k}$, we obtain
\begin{equation}
\sum_k \frac{\gamma_k}{a_k}\frac{\partial B}{\partial
p_k}=g'(\sum_{k}p_k
\gamma_k)\sum_{k}\frac{\gamma_k^2}{a_k}-\sum_{k}(2
p_k\gamma_k+\frac{b_k \gamma_k}{a_k}). \label{eq37}
\end{equation}
Let $s=\sum_{k}p_{k,0} \gamma_k$, \eqref{eq35} and \eqref{eq37}
imply
\begin{equation}
g'(s)\sum_{k}\frac{\gamma_k^2}{a_k}-2s-\sum_{k} \frac{b_k
\gamma_k}{a_k}=0. \label{eq38}
\end{equation}
For $s^{*}=\sum_{k}p_k^{*} \gamma_k$, \eqref{eq36} and \eqref{eq37}
suggest
\begin{equation}
g'(s^{*})\sum_{k}\frac{\gamma_k^2}{a_k}-2s^{*}-\sum_{k} \frac{b_k
\gamma_k}{a_k}\leq 0. \label{eq39}
\end{equation}
Comparing (\ref{eq38}) and (\ref{eq39}), we derive that
\begin{equation}
[g'(s)-g'(s^{*})]\sum_{k}\frac{\gamma_k^2}{a_k}-2(s-s^{*})\geq 0.
\label{eq40}
\end{equation}
If $s-s^{*}\neq 0$, then
\begin{equation}
(s-s^{*})\left[\frac{g'(s)-g'(s^{*})}{s-s^{*}}\sum_{k}\frac{\gamma_k^2}{a_k}-2\right]\geq
0. \label{eq41}
\end{equation}
Since $g(x)$ is a concave function,
$\frac{g'(s)-g'(s^{*})}{s-s^{*}}<0$. Further, by the positivity of
$a_k$, we have $s-s^{*}\leq 0$.

Suppose there exists some $k$ such that $p_{k,0}\leq 0$ and
$p_k^{*}> 0$. Then, \eqref{eq35} and \eqref{eq36} imply
\begin{equation}
\left.\frac{\partial B}{\partial p_k}\right|_{p_k=p_{k,0}}=
\left.\frac{\partial B}{\partial p_k}\right|_{p_k=p_{k}^{*}}=0,
\end{equation}
which further suggests
\begin{equation}
p_{k,0}=\frac{1}{2a_k}\left[\gamma_k g'(s)-b_k\right],~
p_k^{*}=\frac{1}{2a_k}\left[\gamma_k g'(s^{*})-b_k\right].
\end{equation}
Since $s-s^{*}\leq 0$ and $g(x)$ is a concave function, we have
$g'(s)\geq g'(s^{*})$. Therefore, $p_k^{*}\leq p_{k,0}$, which
contradicts with the assumption of $p_{k,0}\leq 0$ and $p^{*}_k>0$.
Therefore, if $p_{k,0}\leq 0$, $p^{*}_k=0$.

\section{Proof of Lemma \ref{lem2}}
{\setcounter{equation}{0}
\renewcommand{\theequation}{B.\arabic{equation}}

We need to use the fact that $f(\vec{p})$ is in the form of
$\sum_nf_n(\sum_k p_{kn}\gamma_{kn})$, where $f_n(\cdot)$ is a
concave function.
Equation \eqref{eq51} can be also written as
\begin{align}
&\nabla f_n(\sum_{k\in R(n)}
p_{kn}\gamma_{kn})\gamma_{kn}{-}\lambda_{k}{-}c_n(p_{kn}{-}y_{kn})=0,~\forall
k\in R(n),\label{eq43}
\end{align}
where $\nabla f_n(\cdot)$ is the subgradient of $f_n(\cdot)$. By
conducting a weighted summation of the equations in \eqref{eq43}, we
obtain
\begin{align}
\nabla f_n(&\sum_{k\in R(n)} p_{kn}\gamma_{kn})\sum_{k\in
R(n)}\gamma_{kn}^2-\sum_{k\in
R(n)}\lambda_{k}\gamma_{kn}\nonumber\\
&-c_n\sum_{k\in R(n)}
p_{kn}\gamma_{kn}+c_n\sum_{k\in R(n)}
y_{kn}\gamma_{kn}=0.\label{eq57}
\end{align}
Let us define $(i=1,2)$
\begin{eqnarray}
\hspace{-1cm}&&{a}_{n,i}=\nabla f_n(\sum_{k\in R(n)} p_{kn,i}\gamma_{kn}) - \nabla f_n(\sum_{k\in R(n)} y^*_{kn}\gamma_{kn}),\\
\hspace{-1cm}&&{b}_{n,i}=\sum_k p_{kn,i}\gamma_{kn}-\sum_k
y_{kn}^{*}\gamma_{kn}.
\end{eqnarray}

Then, \eqref{eq57} indicates
\begin{align}
&(a_{n,1}-a_{n,2})\sum_{k\in
R(n)}\gamma_{kn}^2-c_n(b_{n,1}-b_{n,2})\nonumber\\
=&\sum_{k\in
R(n)}(\lambda_{k,1}-\lambda_{k,2})\gamma_{kn}, ~k\in R(n).\label{eq53}
\end{align}

Then, the formula on the left hand side of \eqref{eq78} can be
written as
\begin{align}\label{eq55}
\left[\nabla f(\vec{p}_1)-\nabla
f(\vec{y}^{*})\right]^{T}(\vec{p}_{2}-\vec{y}^{*})=\sum_{n}{a}_{n,1}{b}_{n,2}.
\end{align}
Since $f_n(\cdot)$ is a concave function, we obtain
\begin{eqnarray}
{a}_{n,1}{b}_{n,1}\leq 0,~~{a}_{n,2}{b}_{n,2}\leq 0.\label{eq54}
\end{eqnarray}
Hence, $-\frac{c_n{b}_{n,1}}{{a}_{n,1}\sum_{k\in
R(n)}\gamma_{kn}^2}\geq 0$.

Then
\begin{align}
&~{a}_{n,1}{b}_{n,2}\left(1-\frac{c_n{b}_{n,1}}{{a}_{n,1}\sum_{k\in
R(n)}\gamma_{kn}^2}\right)\nonumber\\
=&\left({a}_{n,1}-\frac{c_n{b}_{n,1}}{\sum_{k\in R(n)}\gamma_{kn}^2}\right){b}_{n,2}\nonumber\\
=&\left[{a}_{n,2}-\frac{c_n{b}_{n,2}}{\sum_{k\in R(n)}\gamma_{kn}^2}\right.\nonumber\\
&\left.\quad\quad\quad\quad+\frac{\sum_{k\in R(n)}(\lambda_{k,1}-\lambda_{k,2})\gamma_{kn}}{\sum_{k\in R(n)}\gamma_{kn}^2}\right]{b}_{n,2} ~~~(\textrm{by } \eqref{eq53})\nonumber\\
\leq&\frac{-c_n{b}_{n,2}^2{+}{b}_{n,2}\sum_{k\in R(n)}(\lambda_{k,1}{-}\lambda_{k,2})\gamma_{kn}}{\sum_{k\in R(n)}\gamma_{kn}^2}~(\textrm{by } a_{n,2}b_{n,2}\leq 0)\nonumber\\
\leq & \frac{\left[\sum_{k\in R(n)}(\lambda_{k,1}-\lambda_{k,2})\gamma_{kn}\right]^2}{4c_n\sum_{k\in R(n)}\gamma_{kn}^2} ~(\textrm{by completing the square})\nonumber\\
\leq & \frac{1}{4c_n}\sum_{k\in
R(n)}(\lambda_{k,1}-\lambda_{k,2})^2~(\textrm{by
Cauchy-Schwarz}).
\end{align}
Therefore
\begin{align}\label{eq47}
{a}_{n,1}{b}_{n,2}\leq \frac{1}{4c_n}\sum_{k\in
R(n)}(\lambda_{k,1}-\lambda_{k,2})^{2}.
\end{align}
The statement of Lemma \ref{lem2} follows by substituting
(\ref{eq47}) into (\ref{eq55}).

\section{Proof of Theorem \ref{thm1}}
{\setcounter{equation}{0}
\renewcommand{\theequation}{C.\arabic{equation}}
Let us define the norm of dual and auxiliary variables:
\begin{equation}
\|
\vec{\lambda}\|_{A}=\vec{\lambda}^{T}A^{-1}\vec{\lambda},~\|\vec{y}\|_{V }=\vec{y}^{T} V\vec{y},~~\|\vec{y}\|_{B V}=\vec{y}^{T} B^{-1} V
\vec{y}.\nonumber
\end{equation}
Suppose that $(\vec{y}^{*},\vec{\lambda}^{*})$ is a stationary point
of {\textbf{Algorithm A}}, we will show that the Lyapunov function
\begin{equation}
v(\vec{y}(t),\vec{\lambda}(t))=\|\vec{\lambda}(t)-\vec{\lambda}^{*}\|_{A}+\|\vec{y}(t)-\vec{y}^{*}\|_{B
V}
\end{equation}
is non-increasing in iteration number $t$.

In \cite{Lin2006}, it was shown that
\begin{align}
&~v(\vec{y}(t+1),\vec{\lambda}(t+1))-v(\vec{y}(t),\vec{\lambda}(t))\nonumber\\
\leq&-\|\vec{\lambda}(t+1)-\vec{\lambda}(t)\|_{A}\nonumber\\
&+(\vec{\lambda}(t+1)-\vec{\lambda}(t))^{T}EV^{-1}E^{T}(\vec{\lambda}(t+1)-\vec{\lambda}(t))\nonumber\\
&-\|\vec{y}(t)-\vec{p}(t)\|_{V} \nonumber\\
&+2[\nabla f(\vec{z}(t))-\nabla
f(\vec{y}^{*})]^{T}(\vec{p}(t)-\vec{y}^{*}).\label{eq30}
\end{align}

Invoking Lemma \ref{lem2}, we have
\begin{align}
& [\nabla f(\vec{z}(t))-\nabla f(\vec{y}^{*})]^{T}(\vec{p}(t)-\vec{y}^{*}) \nonumber\\
\leq & \frac{1}{4}(\vec{\lambda}(t+1)-\vec{\lambda}(t))^{T}EV^{-1}E^{T}(\vec{\lambda}(t+1)-\vec{\lambda}(t)).
\label{eq31}
\end{align}

Substituting (\ref{eq31}) into (\ref{eq30}), we obtain
\begin{align}
&v(\vec{y}(t+1),\vec{\lambda}(t+1))-v(\vec{y}(t),\vec{\lambda}(t)) \nonumber\\
\leq
&-(\vec{\lambda}(t+1)-\vec{\lambda}(t))^{T}C(\vec{\lambda}(t+1)-\vec{\lambda}(t))-\|\vec{y}(t)-\vec{p}(t)\|_{V},
\label{eq32}
\end{align}
where $C=A^{-1}-\frac{3}{2}EV^{-1}E^{T}$. If $C$ is non-negative
definite, then the Lyapunov function
$v(\vec{y}(t),\vec{\lambda}(t))$ is non-increasing in iteration
number $t$. Then, we can prove that
$(\vec{y}^{*},\vec{\lambda}^{*})$ is a stationary point of
\textbf{Algorithm A} by using the standard Lyapunov drift arguments
in \cite[Prop. 4]{Lin2006}. According to the standard duality
theory, if $(\vec{y}^{*},\vec{\lambda}^{*})$ is a stationary point
of \textbf{Algorithm A}, then $\vec{p}^{*}=\vec{y}^{*}$ provides a
solution to (\ref{eq3}).

Finally, we need to show that $C$ is a non-negative definite matrix,
if \eqref{eq62} is true. Let $\vec{x}$ be any vector of $K$
dimensions, according to \eqref{eq2}, we have
\begin{eqnarray}\label{eq63}
&&~~~\vec{x}^TA^{-1}\vec{x}-\frac{3}{2}\vec{x}^TEV^{-1}E^{T}\vec{x}\nonumber\\
&&=\sum_{k=1}^K a_k^{-1}x_k^2-\frac{3}{2}\sum_{n=1}^N\sum_{k\in R(n)}c_n^{-1} x_k^2\nonumber\\
&&=\sum_{k=1}^K a_k^{-1}x_k^2-\frac{3}{2}\sum_{k=1}^K\sum_{n\in U(k)} c_n^{-1} x_k^2\nonumber\\
&&=\sum_{k=1}^K \left(a_k^{-1}-\frac{3}{2}\sum_{n\in U(k)}
c_n^{-1}\right) x_k^2.
\end{eqnarray}
By \eqref{eq62}, we can obtain
\begin{eqnarray}
a_k\leq \frac{2\min_{n\in U(k)}c_n}{3|U(k)|}.
\end{eqnarray}
This further suggests
\begin{eqnarray}\label{eq64}
a_k^{-1}\geq \frac{3}{2}|U(k)|\max_{n\in U(k)}c_n^{-1}\geq \frac{3}{2}\sum_{n\in U(k)}c_n^{-1}.
\end{eqnarray}
Substituting \eqref{eq64} into \eqref{eq63}, we obtain that
$\vec{x}^TA^{-1}\vec{x}-\frac{3}{2}\vec{x}^TEV^{-1}E^{T}\vec{x}\geq0$
for any $\vec{x}$. Therefore, $C$ is a non-negative definite matrix.


\vspace{-40pt}
\begin{IEEEbiography}[\resizebox{1 in}{!}{\includegraphics{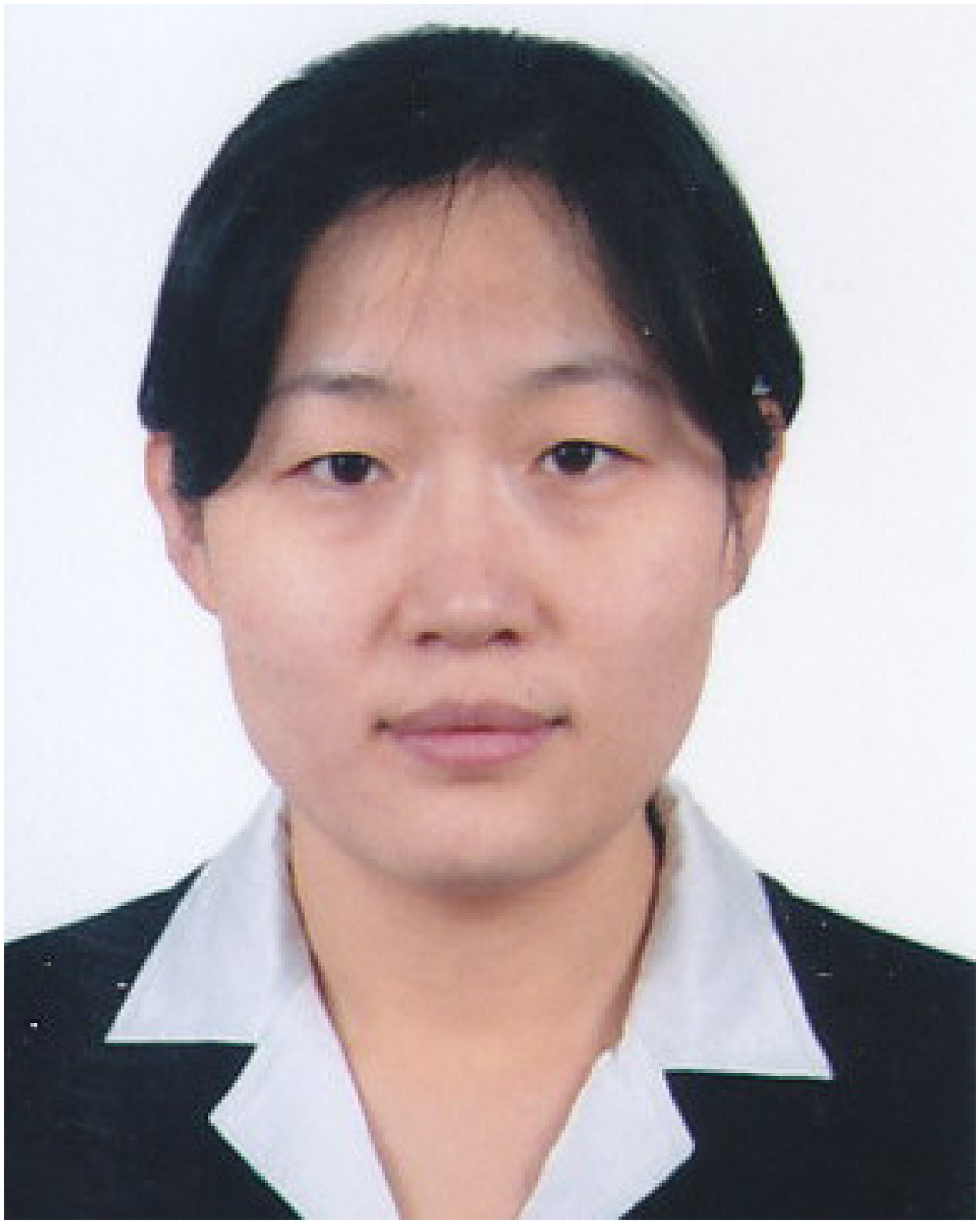}}]
{Xiujun Zhang} received B.E. and M.S. degrees in electronic engineering from Tsinghua University, Beijing, China, in 2001 and 2004, respectively. She is currently pursuing the Ph.D. degree with the Wireless and Mobile Communication Technology R\&D Center, Research Institute of Information Technology, Tsinghua University. Her research interests are in the area of signal processing and wireless communications.
\end{IEEEbiography}

\vspace{-40pt}
\begin{IEEEbiography}[\resizebox{1 in}{!}{\includegraphics{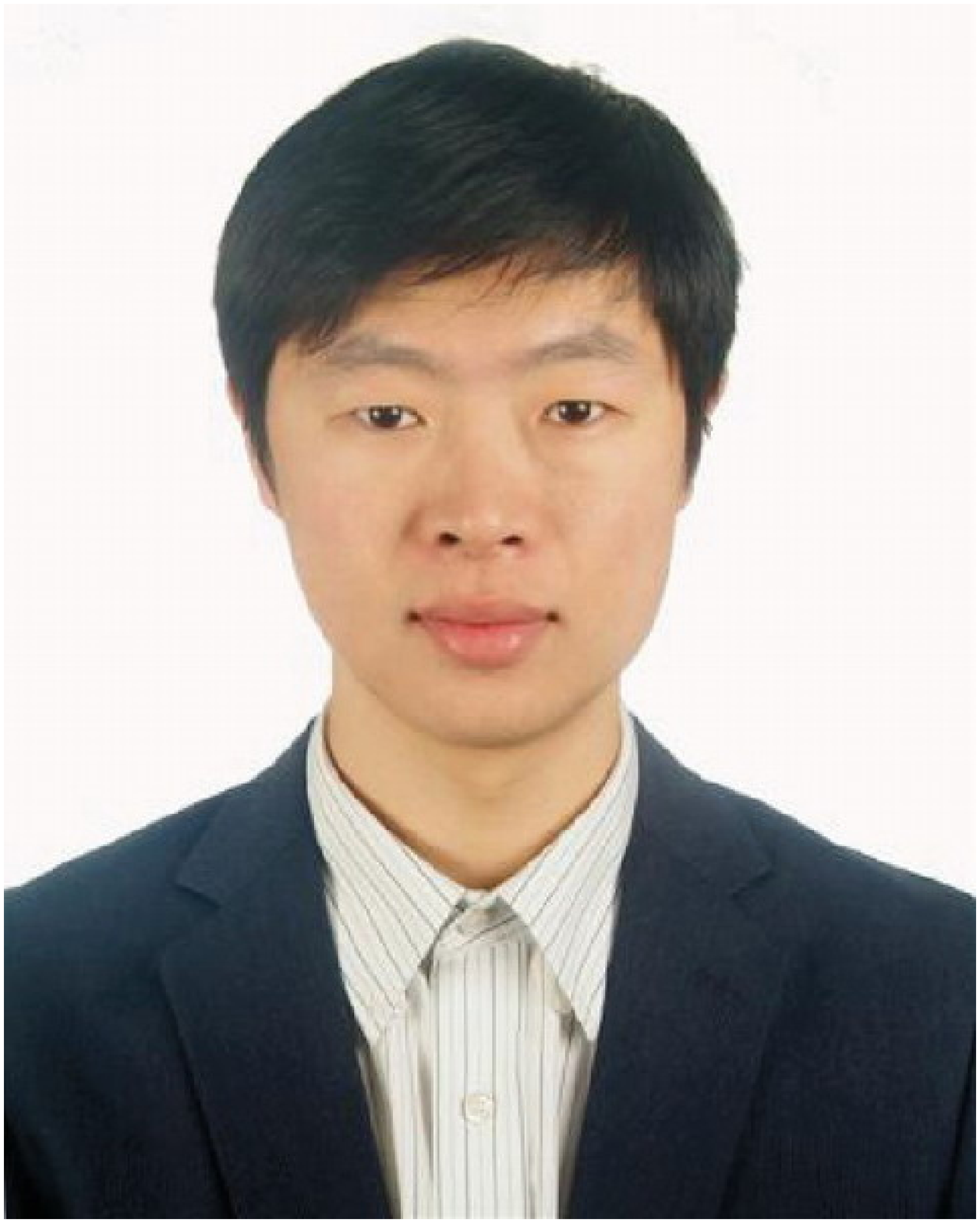}}]
{Yin Sun} (S'08-M'11) received the B. Eng. Degree and Ph.D. degree in electrical engineering from Tsinghua University, Beijing, China, in 2006 and 2011, respectively.

He is currently a Post-doctoral Researcher at the Ohio State University. His research interests include probability theory, optimization, information theory and wireless communications.

Dr. Sun received the Tsinghua University Outstanding Doctoral Dissertation Award in 2011.
\end{IEEEbiography}

\vspace{-20pt}
\begin{IEEEbiography}[\resizebox{1 in}{!}{\includegraphics{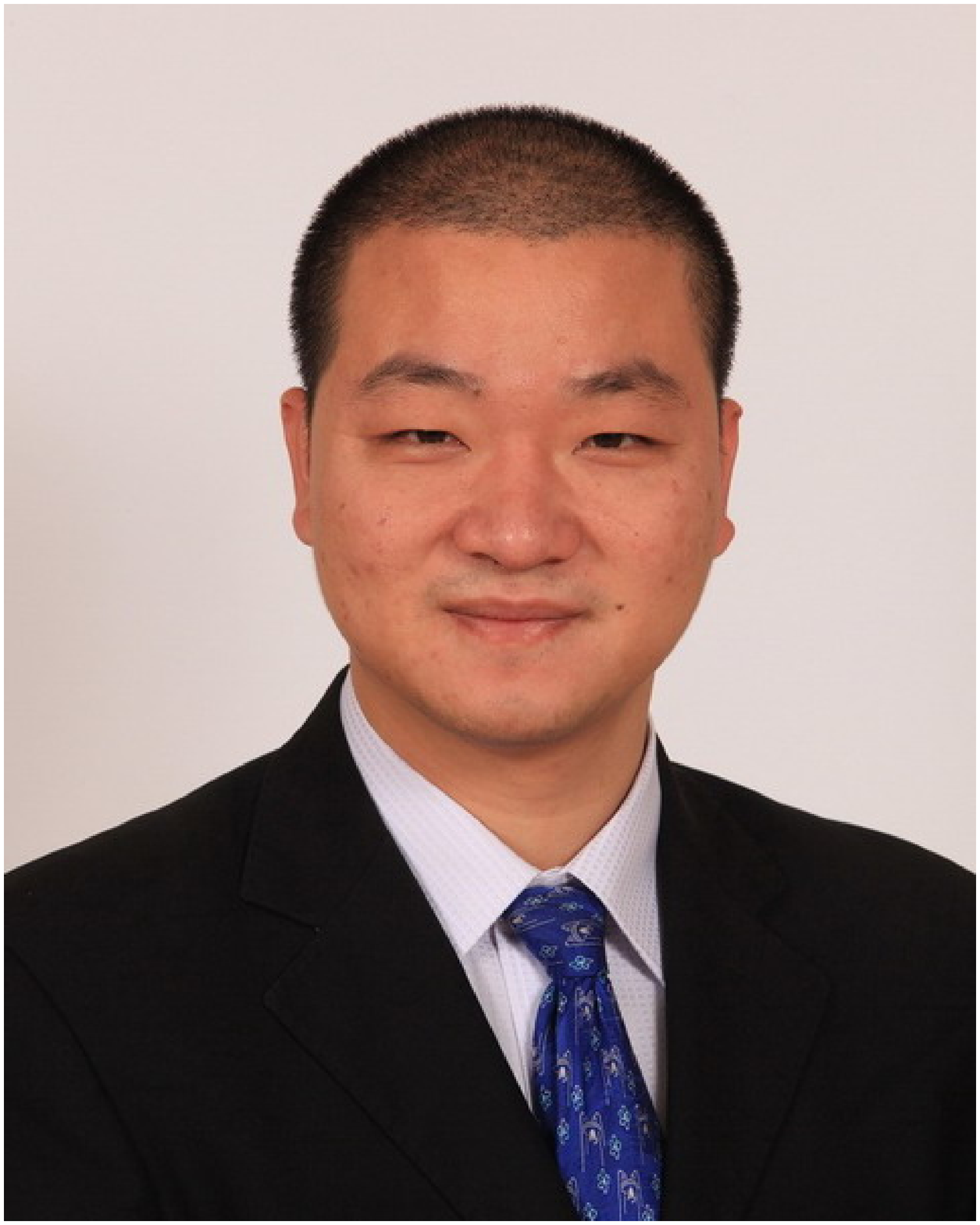}}]
{Xiang Chen} (S'02-M'07) received the B.E. and Ph.D. degrees both from the Department of Electronic Engineering, Tsinghua University, Beijing, China, in 2002 and 2008, respectively.

From July 2008 to July 2012, he was with the Wireless and Mobile Communication Technology R\&D Center (Wireless Center), Research Institute of Information Technology in Tsinghua University. Since August 2012, he serves as an assistant researcher at Aerosapce Center, School of Aerospace, Tsinghua University, Beijing, China. During July 2005 and August 2005, he was an internship student at Audio Signal Group of Multimedia laboratories, NTT DoCoMo R\&D, In YRP, Japan. During September 2006 and April 2007, he was a visiting research student at Wireless Communications \& Signal Processing (WCSP) Lab of National Tsing Hua University, Hsinchu, Taiwan. Dr. Chen's research interests mainly focus on statistical signal processing, digital signal processing, software radio, and wireless communications.
\end{IEEEbiography}

\vspace{-40pt}
\begin{IEEEbiography}[\resizebox{1 in}{!}{\includegraphics{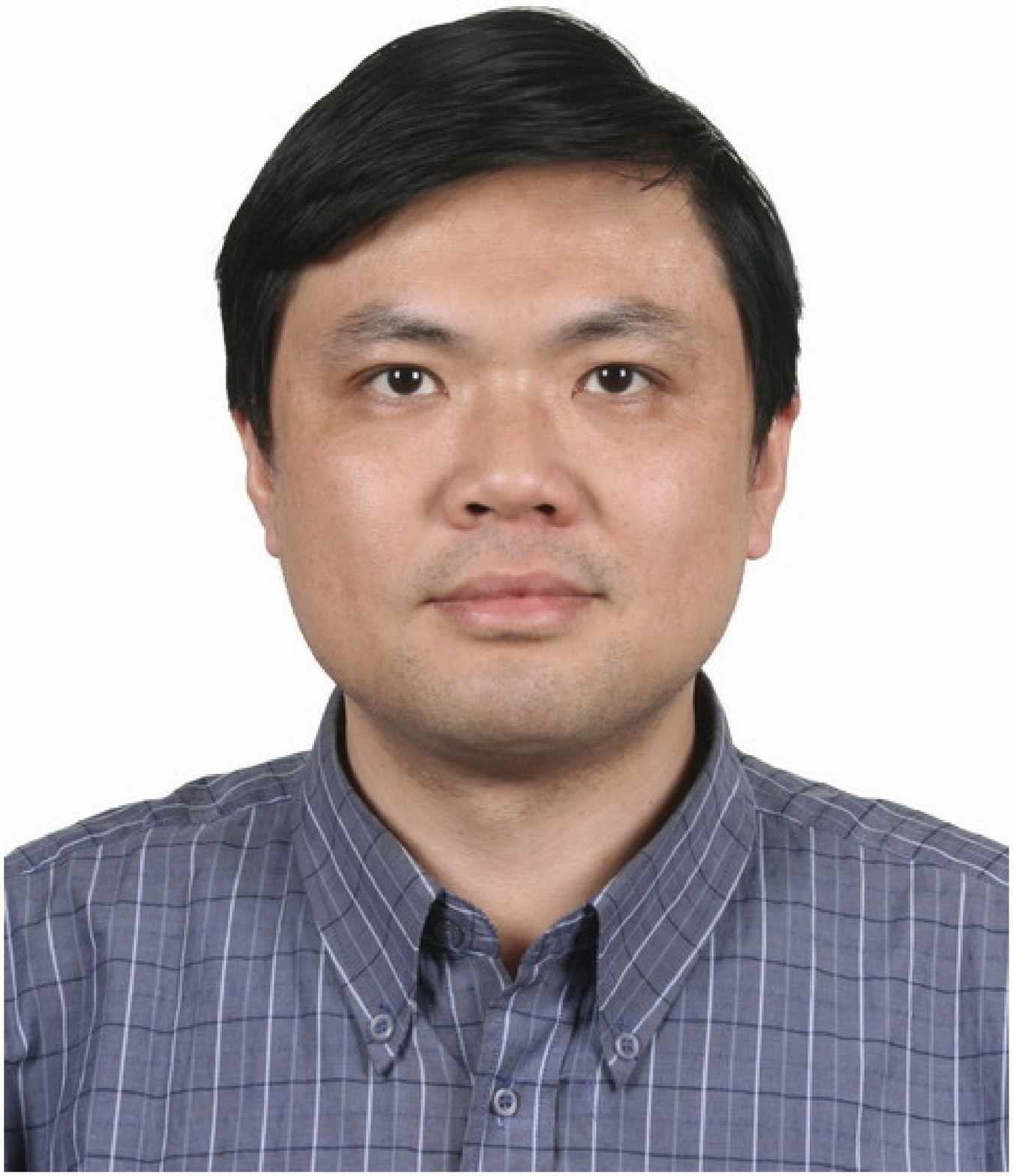}}] {Shidong Zhou} (M'98) received the B.S. and M.S. degrees in wireless communications were received from Southeast University, Nanjing, China, in 1991 and 1994, respectively, and the Ph.D. degree in communication and information systems from Tsinghua University, Beijing, China, in 1998.

From 1999 to 2001, he was in charge of several projects in the China 3G Mobile Communication R\&D Project. He is currently a Professor at Tsinghua University. His research interests are in the area of wireless and mobile communications.
\end{IEEEbiography}

\vspace{-20pt}
\begin{IEEEbiography}[\resizebox{1 in}{!}{\includegraphics{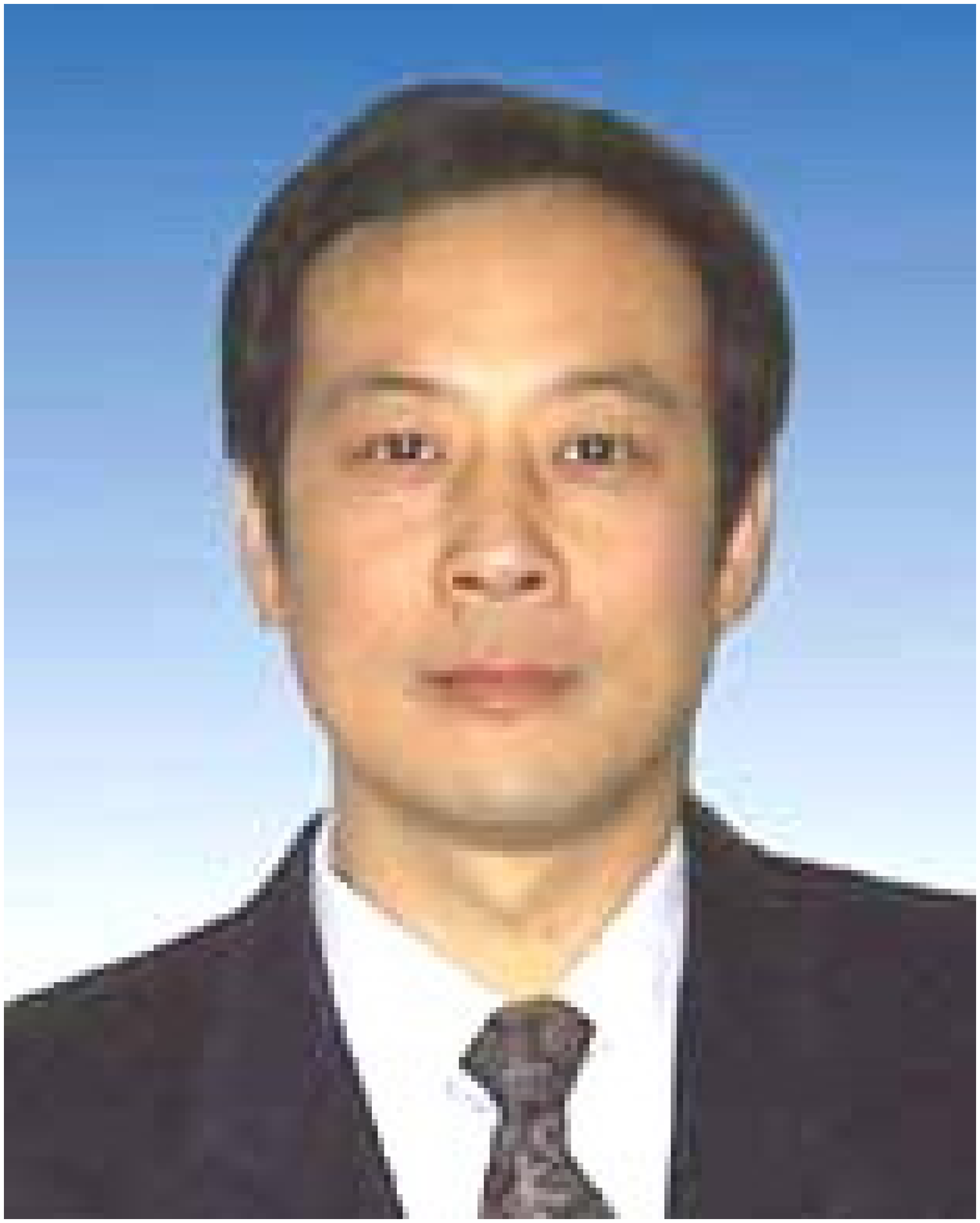}}]
{Jing Wang} (M'99) received the B.S. and M.S. degrees in electronic engineering from Tsinghua University, Beijing, China, in 1983 and 1986, respectively.

He has been on the faculty of Tsinghua University since 1986. He currently is a Professor and the Vice-Dean of the Tsinghua National Laboratory for Information Science and Technology. His research interests are in the area of wireless digital communications, including modulation, channel coding, multiuser detection, and 2D RAKE receivers. He has published more than 100 conference and journal papers. He is a member of the Technical Group of China 3G Mobile Communication R\&D Project, and he serves as an expert of communication technology in the National 863 Program.

Mr. Wang is also a member of the Radio Communication Committee of Chinese Institute of Communications and a senior member of the Chinese Institute of Electronics.
\end{IEEEbiography}

\vspace{-9cm} \vspace{-20pt}
\begin{IEEEbiography}[\resizebox{1 in}{!}{\includegraphics{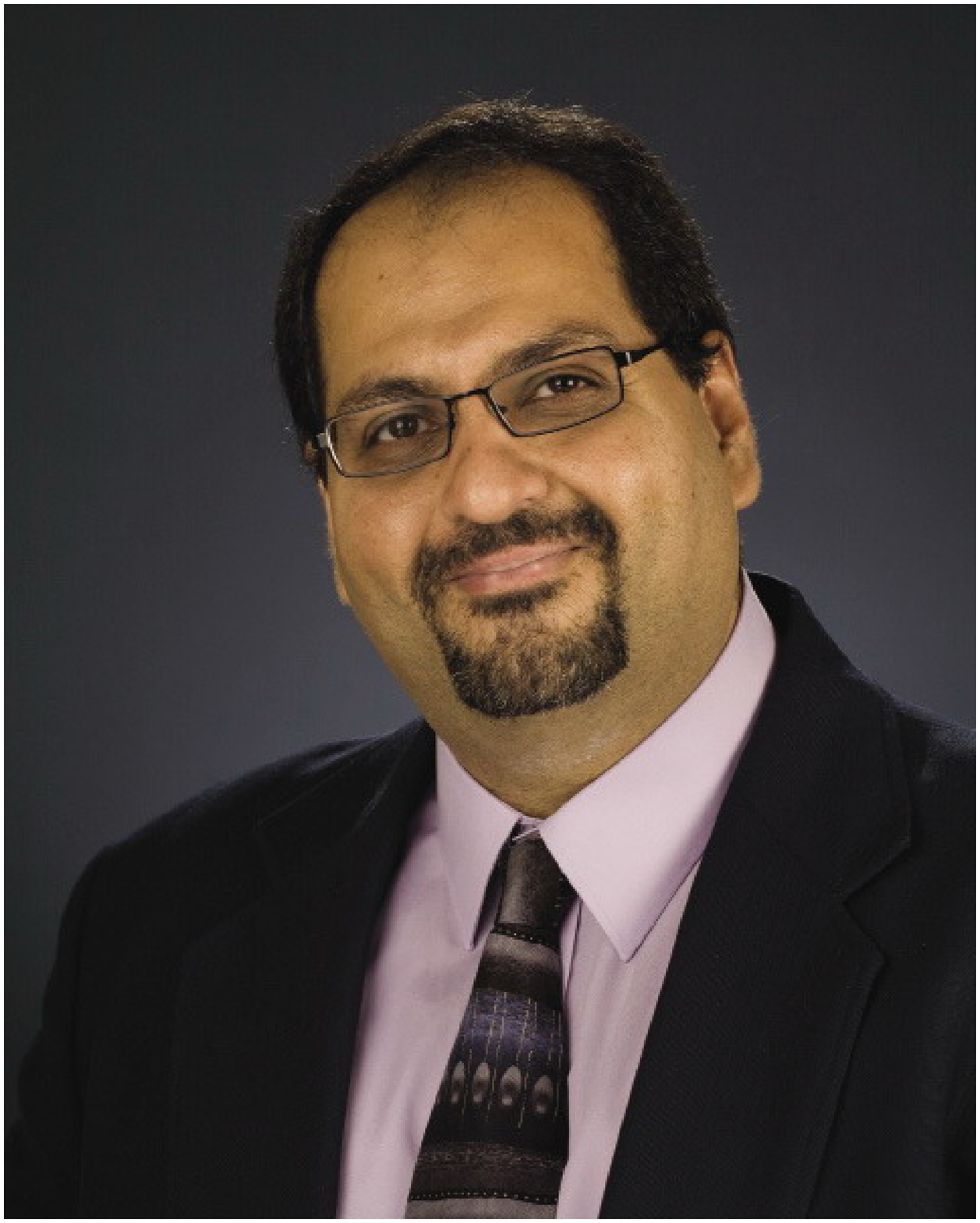}}]
{Ness B. Shroff} (F'07) received his Ph.D. degree from Columbia University, NY, in 1994 and joined Purdue University as an Assistant Professor. At Purdue, he became Professor of the School of Electrical and Computer Engineering in 2003 and director of CWSA in 2004, a university-wide center on wireless systems and applications. In July 2007, he joined The Ohio State University as the Ohio Eminent Scholar of Networking and Communications, a chaired Professor of ECE and CSE. He is also a guest chaired professor of Wireless Communications and Networking in the department of Electronic Engineering at Tsinghua University. His research interests span the areas of wireless and wireline communication networks. He is especially interested in fundamental problems in the design, performance, pricing, and security of these networks.

Dr. Shroff is a past editor for IEEE\/ACM Trans. on Networking and the IEEE Communications Letters and current editor of the Computer Networks Journal. He has served as the technical program co-chair and general co-chair of several major conferences and workshops, such as the IEEE INFOCOM 2003, ACM Mobihoc 2008, IEEE CCW 1999, and WICON 2008. He was also a co-organizer of the NSF workshop on Fundamental Research in Networking (2003) and the NSF Workshop on the Future of Wireless Networks (2009). Dr. Shroff is a fellow of the IEEE. He received the IEEE INFOCOM 2008 best paper award, the IEEE INFOCOM 2006 best paper award, the IEEE IWQoS 2006 best student paper award, the 2005 best paper of the year award for the Journal of Communications and Networking, the 2003 best paper of the year award for Computer Networks, and the NSF CAREER award in 1996 (his INFOCOM 2005 paper was also selected as one of two runner-up papers for the best paper award).
\end{IEEEbiography}

\end{document}